\definecolor{MyBlue}{RGB}{50,100,200}
\newtheorem{assumption}{Assumption}
 \crefname{assumption}{Assumption}{Assumptions}
\pgfplotsset{compat=1.18}
\newcommand{\ideal}{\textsc{Ideal}}
\newcommand{\isbad}{\textsf{flag}}
\newcommand{\mupdate}{\textsf{nextstate}}
\newcommand{\Nb}{\mathbb{N}}
\newcommand{\Gb}{\mathbb{G}}
\newcommand{\Zb}{\mathbb{Z}}
\newcommand{\Cc}{\mathcal{C}}
\newcommand{\Dc}{\mathcal{D}}
\newcommand{\Gc}{\mathcal{G}}
\newcommand{\Hc}{\mathcal{H}}
\newcommand{\Sc}{\mathcal{S}}
\newcommand{\Pc}{\textsf{Sys}}
\newcommand{\Mc}{\textsf{Mon}}
\newcommand{\seq}[1]{\left\langle #1 \right\rangle}
\newcommand{\tpl}[1]{\left( #1 \right)}
\newcommand{\set}[1]{\left\{ #1 \right\}}
\newcommand{\setup}{\texttt{Setup}}
\newcommand{\Oh}{\mathcal{O}}
\newcommand{\distribution}{\Hc}
\newcommand{\Enc}{\text{Enc}}
\newcommand{\encYao}{\text{encYao}}
 \newcommand{\protocolone}{Open Specification Protocol}
 \newcommand{\protocoltwo}{Hidden Specification Protocol}
\newcommand{\drawlock}[2]{ % #1 = x position, #2 = y position
    \begin{scope}[shift={(#1,#2)}, scale=0.01]  % Adjust scale here
        % Draw the shackle (U-shaped part)
        \draw[thick] (-10,20) arc[start angle=180, end angle=0, radius=10] -- (10,5) -- (7,5) 
                     -- (7,15) arc[start angle=0, end angle=180, radius=7] -- (-7,5) -- (-10,5) -- cycle;

        % Draw the body of the lock
        \draw[thick, fill=gray!30] (-15,5) rectangle (15,-20);

        % Draw the keyhole
        \draw[fill=black] (0,-10) circle (3);
        \draw[thick] (-2,-10) -- (2,-10) -- (0,-16) -- cycle;
    \end{scope}
}
\newcommand\indistinguishable{\mathrel{\stackrel{\makebox[0pt]{\mbox{\normalfont\tiny c}}}{\equiv}}}
\newcommand{\systemval}[2]{\ifthenelse{\equal{#2}{}}{\texttt{pval}_{#1}}{\texttt{pval}_{#1}\left({#2}\right)}}
\newcommand{\monitorval}[2]{\ifthenelse{\equal{#2}{}}{\texttt{rval}_{#1}}{\texttt{sval}_{#1}\left({#2}\right)}}
\newclass{\TOWER}{TOWER}
\newclass{\ACKERMANN}{ACKERMANN}
\newclass{\EXPTIME}{EXPTIME}
\newclass{\IIEXPTIME}{2-EXPTIME}
\newclass{\NEXPTIME}{NEXPTIME}
\renewcommand{\leq}{\leqslant}
\renewcommand{\geq}{\geqslant}
\newcommand{\encGG}{\textbf{encGG}}
\newcommand{\fakeGarble}{\overline{\textbf{Garble}}}
\newcommand{\Garble}{\textbf{Garble}}
\newcommand{\fakeKeys}{\overline{\textbf{Keys}}}
\newcommand{\keys}{\textbf{Keys}}
\newcommand{\view}{\texttt{view}}
\newcommand{\proceed}{\textsc{proceed}}
\newcommand{\terminate}{\textsc{terminate}}
\newcommand{\pstring}{\sigma}
\newcommand{\mstring}{\mu}
\newcommand{\OT}{\texttt{OT}}
\newcommandx{\theju}[2][1=]{\todo[linecolor=blue,backgroundcolor=blue!25,bordercolor=blue,#1]{\tiny T: #2}}
\newcommandx{\mahyar}[2][1=]{\todo[linecolor=red,backgroundcolor=red!25,bordercolor=red,#1]{\tiny M:#2}}
\begin{document}
\title{Privacy-Preserving Runtime Verification}
%
%\titlerunning{Abbreviated paper title}
% If the paper title is too long for the running head, you can set
% an abbreviated paper title here
%
\author{
% \hyperlink{https://pub.ista.ac.at/~tah/}{Thomas A. Henzinger}
% \orcidlink{0000-0001-6077-7514}
% \\ IST Austria \\ \texttt{tah@ist.ac.at} \and Mahyar Karimi
%  \\ IST Austria \\ \texttt{aditya.prakash@warwick.ac.uk} 
% \and \hyperlink{https://thejaswiniraghavan.github.io/}{K. S. Thejaswini} \orcidlink{0000-0001-6077-7514}
% \\IST Austria \\ \texttt{thejaswini.k.s@ista.ac.at}}
Thomas A. Henzinger\inst{1}\orcidlink{0000-0001-6077-7514} %\and
Mahyar Karimi\inst{1}\orcidlink{0009-0005-0820-1696}\and
K. S. Thejaswini\inst{1}\orcidlink{0000-0001-6077-7514}}
\authorrunning{T. A. Henzinger, M. Karimi, and K. S. Thejaswini}
% First names are abbreviated in the running head.
% If there are more than two authors, 'et al.' is used.
%
\institute{Institute of Science and Technology Austria, Klosterneuberg, Austria
 \email{\{tah,mahyar.karimi,thejaswini.k.s\}@ista.ac.at}}
\maketitle              % typeset the header of the contribution
\begin{abstract}
Runtime verification offers scalable solutions to improve the safety and reliability of systems. However, systems that require verification or monitoring by a third party to ensure compliance with a specification might contain sensitive information, causing privacy concerns when usual runtime verification approaches are used. Privacy is compromised if protected information about the system, or sensitive data that is processed by the system, is revealed. In addition, revealing the specification being monitored may undermine the essence of third-party verification. 

In this work, we propose two novel protocols for the privacy-preserving runtime verification of systems against formal sequential specifications. In our first protocol, the monitor verifies whether the system satisfies the specification without learning anything else, though both parties are aware of the specification. Our second protocol ensures that the system remains oblivious to the monitored specification, while the monitor learns only whether the system satisfies the specification and nothing more. Our protocols adapt and improve existing techniques used in cryptography, and more specifically, multi-party computation.

The sequential specification defines the observation step of the monitor, whose granularity depends on the situation (e.g., banks may be monitored on a daily basis). Our protocols exchange a single message per observation step, after an initialisation phase. This design minimises communication overhead, enabling relatively lightweight privacy-preserving monitoring. We implement our approach for monitoring specifications described by register automata and evaluate it experimentally.
% Systems that contain sensitive data must also be monitored by a third-party monitor, without fear of leaking this data. 
% Revealing the specification being monitored is not a satisfying solution, as this no longer is a third-party verification.

% We provide two protocols that allow for monitoring of systems in a privacy preserving manner, where the monitor holds a specification (which may be temporal), and the system some data. 
% In our first protocol, the monitor can monitor without learning anything other than whether the system satisfies the specification, but both the parties know the specification. 
% In our second protocol, the program \emph{does not learn} the specification of the monitor, and the monitor \emph{learns nothing other than if the input of the program satisfies the specification}. We achieve this by developing new protocols using techniques used in cryptography, and more specifically,  multi-party computation. At each step of the monitoring process, our protocols are designed to send one message per round and do not require a response. Therefore, after an initialisation phase, the parties only need to publish one message for the system to convince the system that it satisfies the specifications. 

% We provide an implementation for instances of monitoring specification described as a register automaton, and our experimental evaluation confirms the practical utility of the protocol for monitoring.
\keywords{Privacy-preserving verification  \and Runtime verification \and Monitoring}
\end{abstract}
\section{Introduction}\label{sec:intro}
Recent advances have demonstrated that verification can be performed in a privacy-preserving manner spanning a variety of domains, including SAT solving~\cite{LJAPW22}, verifying resolution proofs~\cite{LAHPTW22}, matching strings against regular expressions~\cite{LWSTRP24}, and also model-checking specifications described by CTL formulas~\cite{JLAP20}.
% These works highlight an emerging challenge in the field of verification: incorporating privacy concerns into verification processes.
However, privacy-preserving verification often introduces significant computational overhead. This poses scalability challenges, particularly in real-world applications where the underlying systems involve an enormous number of states. 
% Adding privacy-preserving algorithms in such scenarios can quickly become impractical due to the increased computational complexity.
In contrast, runtime verification~\cite{LS09,BFFR18} focuses not on verifying the entire system but rather on monitoring specific outputs produced by the system during execution. This approach inherently involves smaller-scale data exchanges, making it a suitable candidate for privacy-preserving methods. %Instead of applying cryptographic protocols to the static models of systems with vast state spaces, runtime verification allows us to concentrate on outputs, which typically have a more compact representation. Consequently, privacy-preserving techniques are more feasible and attractive when applied to runtime verification, and more specifically monitoring.

% A solution we propose is to consider privacy-preserving runtime-verification\mahyar{runtime is a substitute for static verif. for this reason, without privacy. We need to make the link a bit different.}, where one can instead build monitors that take data output by a system, treating it as a black-box to verify if any violations can be detected during such a run of a program. 

Privacy and monitoring seem at odds. Monitoring typically involves verifying whether the execution trace of a system satisfies a given specification, which seemingly requires access to the system trace. Privacy, on the other hand, demands that protected information about the system remain undisclosed. Balancing these conflicting objectives presents a significant challenge, particularly in contexts where monitoring is essential but the underlying data is sensitive. This challenge arises in many real-world scenarios; a bank undergoing an audit may need to share data to demonstrate compliance, a hospital may need to report statistics to ensure adherence to healthcare standards, a self-driving car or metro system may be monitored for its operational safety. In each case, sharing protected IP or client data with a third-party monitor raises concerns about privacy.
One possible solution is to allow the system to self-monitor by embedding the specification into the system and generating its own verification results. However, this approach lacks credibility in settings where third-party validation is crucial. For instance, if hospitals were entirely self-certified, the certificates would lack the impartiality needed to inspire public confidence. 
%On the other hand, sharing all the data with a third-party monitor compromises privacy, making this approach equally unsatisfactory.
%This leads us to build robust monitoring s
% This paper addresses the tension between monitoring and privacy by proposing privacy-preserving monitoring techniques. 
This necessitates a solution that allows third-party monitors to continuously and repeatedly verify whether a system complies with a given specification, all while obscuring the internals of the system and its data.

%This calls for a solution that enables third-party monitors to verify  in a continuous and repeated manner, whether a system continues to satisfies specific properties without exposing sensitive data. %This dual goal is increasingly relevant in today’s interconnected systems, where safeguarding data privacy while ensuring compliance is paramount.

Hiding the system and its data is just a first step, but our need for privacy might not stop there. Sometimes, one might even need to hide the specification that a system is being monitored against. 
For motivation, we can again consider the domain of healthcare, where the runtime verification of traces satisfying specifications expressed in temporal logic has already been studied~\cite{JLKWHSS16}. 
For instance, in the National Health Services (NHS) in England, hospital funding is tied to performance metrics, optimisation of which led many hospitals to restructure their operations~\cite{Cra17,Mea14}. Unfortunately, such restructurings resulted in extreme cases like the events at Mid-Staffordshire NHS Foundation Trust, where financial targets led to patient neglect~\cite{mid2013report}. This exemplifies Goodhart's Law: \emph{``When a measure becomes a target, it ceases to be a good measure,''} or, as an ex-NHS manager put it, \emph{``hitting the target but missing the point.''} Indeed any measure has a potential to become a target when gamifications are employed to achieve that target. 
To avoid such distortions, it is crucial to also develop protocols for monitoring that keep both the system/data and specification private, ensuring accurate and unbiased evaluation.

\paragraph{Cryptography.} We use tools developed in cryptography to accomplish our privacy-preserving algorithms in the settings described above. 
 Our privacy-preserving monitoring algorithms rely on techniques from multi-party computation (MPC), studied since the 1980s~\cite{Yao82}. MPC allows two or more parties, who do not trust each other, to collaboratively compute a function on their private inputs without revealing these inputs to each other. Typically, MPC focuses on a ``one-shot'' setting, where the parties compute one or more outputs based on their inputs, but do so exactly once. %However, monitoring as we know is a continuous repeated process.
We use a specific variant of MPC called private function evaluation (PFE), where one party owns a private function (a ``secret circuit'') along with a part of the input, while the other party holds the rest of the input to that function. The goal is to design a protocol that allows the function to be evaluated using the inputs of both parties securely: one or both parties can learn the result of the function on the input, but nothing else is revealed. Essentially, PFE enables computation as if the function owner shared their circuit and the other shared their input, without actually doing so.

Runtime verification, however, requires not a one-shot but a repeated evaluation process, whose granularity---called ({\em observation}) {\em round}---depends on the application.
The performance of banks may be observed daily, hospitals monthly, autopilots and metro systems every second. 
Although most existing PFE protocols are designed for single-use computations, a few have been modified for efficient repeated use~\cite{MS13,BBKL19,LWY22};
however, these usually do not account for situations where the computation needs to be repeated many times, nor for maintaining internal states that are kept secret from all parties. 
Such {\em reactive functionalities} are important for monitoring systems, where internal states of the system (e.g., accumulated data or ongoing logs) and the monitor (for sequential specifications) need to remain hidden even across repeated interactions. 
Protocols that can be adapted to reactive functionalities using standard cryptographic techniques like secret sharing require at least as many message exchanges per round as Oblivious Transfer (cryptographic protocols in which a receiver is to obtain one of many messages from a sender without revealing their choice), which exchanges three messages~\cite{CSW20} and causes significant computational overhead.

\paragraph{Our contribution.}
We provide protocols that would aid a monitoring setup as show in \cref{fig:monitoringsetup}. Implementing our protocols on the system's side, and monitor's side enables monitoring where only one message is sent per observation round, the specification is kept secret, and the observable outputs of the system are kept secret. 
Instead of using secret sharing to construct PFE protocols for reactive functionalities, we propose novel protocols designed specifically for monitoring safety properties. 
Our protocols send a single message per round from the system to the monitor, reducing computational cost while still ensuring privacy.
The specification is given as a state machine with a next-state function that, in each round, updates the specification state based on an observation of the system.
Additionally, the specification maintains a boolean output called \emph{flag}.
Our protocols ensure that the monitor learns about the monitored system only a single bit per round---the flag---which indicates whether or not the specification is satisfied for the prefix of the trace observed so far. 
In this way, the monitor can be convinced of the trace's correctness without knowing the input, output, nor internal 
state of the monitored system in any round, nor even the internal state of the specification itself.

We consider two security settings for which we provide privacy-preserving protocols for safety monitoring:
\begin{itemize}
    \item[(a)] [Open specification] The specification is not a secret and known to both parties.
    \item[(b)] [Hidden specification] The specification is a secret and known only to the monitor. 
\end{itemize}
Note that in setting~(a), the monitored system knows which of its parts (which system inputs, which system outputs, and which internal system states) are being looked at by the specification, 
while in setting~(b) it does not.
Hence setting (b) is particularly interesting for large systems being monitored against small specifications. 

% We assume that the system being monitored holds data in $\ell$ variables $s_1,\dots,s_\ell$ that can store, say as integers, that changes at each time-step.
Our privacy-preserving protocols are obtained by first converting the specification into a boolean circuit that computes a next-state function along with the flag function. %Such circuits take as input these inputs in a binary representation Their output corresponds to the binary representation of the register contents, which act as the input for the next time-step along with one bit for the flag register, which outputs if the specification has been satisfied or not. 
%Using a circuit obtained from a monitor's specification, we provide efficient protocols for MPC and PFE with a reactive functionality component. 
%That is, we provide a protocol that ensures that the current state of the monitor's interal state machine is also kept secret during each iteration. 
%These protocols are adapted from existing one-shot protocols for MPC and PFE. 
For setting~(a), where the specification is not a secret, we produce a protocol that is a  modification of the classic MPC protocol known as Yao's garbled circuits~\cite{Yao86,GMW87}. 
The intuition behind garbled circuits is that the monitored system encrypts each gate of the specification circuit and sends this encrypted value to the monitor. 
We modify the classical protocol to enable repeated computation while maintaining the secrecy of the specification state.  
%further intuition on this process \theju{To do.} while describing the protocol. 
For setting~(b), where only the monitor knows the specification, %the above protocol is no longer a viable solution, as knowing the circuit is important to compute the protocol. However, 
inspired by recent advances in PFE, we provide novel protocols that compute reactive functionalities with low computational overhead. 
We build upon the seminal work of Katz and Malka~\cite{KM11} and the recent work of Liu, Wang, and Yu~\cite{LWY22}, which has built on other PFE-related works~\cite{MS13,BBKL19}.
% All of the above settings require low number of exchanges of messages per round of the protocol. Specifically, in the work of Bicer et al., and the work of Mohassel and Sadeghian, 
Our protocols are designed so that the system sends only one message per round to the monitor.
Although there is an initial setup phase that may involve multiple message exchanges, this is a one-time cost and does not affect the ongoing performance of the protocol during runtime. This is also helpful in monitoring situations where the hardware for bidirectional communication is unavailable.

We implemented our protocols to analyse the influence of several parameters involved in building such protocols. 
We use specifications described as register automata~\cite{GDPT13}, which we convert to boolean circuits. 
Our experiments show the feasibility of our protocols when the circuit sizes are on the order of $10^5$ for acceptable security parameters. 
Additionally, in our second protocol designed for hidden specifications, the time per round is influenced more by the size of the specification than by the size of the monitored system.
This allows scalability to large system sizes as long as the specification remains small. 
Indeed, our protocol is significantly more scalable than those in recent related works~\cite{BMMBWS22,WMSMBBS24} which, although similar in motivation, address fundamentally different problems in distinct settings. %, particularly with respect to the applicability of privacy-preserving monitoring for LTL and STL specifications.
%Indeed, our protocol is significantly more scalable than the protocols in the recent related works~\cite{BMMBWS22,WMSMBBS24}, which addresses a related yet different in terms of the settings and applicability of privacy-preserving monitoring for LTL and STL specifications.
% Indeed, our protocol is significantly more scalable than those in recent related works~\cite{BMMBWS22,WMSMBBS24}, which address similar problem but differ in their settings and the applicability to privacy-preserving monitoring of LTL and STL specifications
%We show that we can handle all their specifications with a monitoring latency of a few hundred milliseconds, while even on the smallest specifications we considered, their protocol is estimated to take a few hours to monitor a single event. 
%In addition, the complexity of the messages of our protocol is relatively small.%, thus paving way to tractable privacy preserving monitoring algorithms. 

% \new{Finally, for the highest security setting where the monitored party is covert, we modify the protocol of Liu et al.,\cite{LWY22} that is secure against covert adversaries. However, this results in higher cost of the protocols since more has to be done to repeat the protocol and maintain the monitor state private at each step. 
% }
\paragraph{Related works.} 
Recent work on privacy-preserving verification has largely focused on static settings. A wide range of verification paradigms have also been shown to be amenable to cryptographic techniques: for example, verifying resolution proofs in zero knowledge~\cite{LAHPTW22}, solving SAT formulas~\cite{LJAPW22}, matching strings against regular expressions~\cite{LWSTRP24}, checking string and regular expression equivalence~\cite{KAAP25}, and model-checking CTL specifications~\cite{JLAP20}. In contrast, our work targets runtime verification, where the system’s data evolves dynamically and must be processed incrementally at each step.

%\thejaswini{to check things from here}
With a similar motivation of monitoring specifications in a privacy preserving manner, Banno et al.~\cite{BMMBWS22}, and later Waga et al.~\cite{WMSMBBS24} 
provided algorithms for oblivious online monitoring for Linear Temporal Logic (LTL) specifications~\cite{Pnu77} and Signal Temporal Logic (STL)~\cite{MN04} specifications, respectively using Fully Homomorphic Encryption (FHE).
Although at first glance Banno et al. and Waga et al. may appear to solve the same problem, the settings for ours and their problems are different. 
Their main objective is to ensure monitoring of specifications where the computation can be outsourced to a second party, or an external server. 
Further, in their setting, this server knows the specification, however learns neither the system's observable outputs, nor if the specification is satisfied. Only the system itself learns if the specification is satisfied. 

Due to our focus on enabling privacy-preserving third-party monitoring, in our setting, it is imperative that the party that knows the specification (monitor) can also learn if this specification is being satisfied, but not the system's observable outputs. Their protocol would therefore not be an appropriate solution in our setting, since it cannot be modified to make the second party that knows the specification learn whether the specification is satisfied, without also learning the system's observable outputs. 
On the other hand, both the related works deal with malicious Systems---a setting our protocol does not extend to, and hence our protocols would not be a appropriate solution in their setting either. 

In terms of specifications, their work handles specifications represented using temporal logic that is later converted into finite state automata. We highlight that we consider circuits to described our specifications, and this representation is general enough that it handles all sequential specifications, including LTL, STL, or finite state automata, much more succinctly.  
%This server does not learn the satisfaction of the specification, and only the party that is being monitored learns the outcome of this evaluation. Indeed, this is different from our setting where we are motivated by ensuring that the monitoring is done in a third-party setting and in a privacy preserving manner. 

%%%%%%%%%%%%%%%%%%%%
% Closer to our setting, Banno et al.~\cite{BMMBWS22}, and later Waga et al.~\cite{WMSMBBS24} 
% provided algorithms for oblivious online monitoring for different specification languages. Banno et al~\cite{BMMBWS22} proposed protocols to obliviously monitor Linear Temporal Logic (LTL) specifications~\cite{Pnu77}, which they convert into deterministic finite automata (DFA). Their protocols use Fully Homomorphic Encryption (FHE) on either the DFA or its reverse. Similarly, Waga et al.~\cite{WMSMBBS24} also use FHE on DFAs that are obtained from specifications described using Signal Temporal Logic (STL)~\cite{MN04}.  
% In both these related works however, the monitored System is the only party that can receive the output. This makes it an unsatisfactory solution when third-party monitoring is considered as a motivation to the problem. 
% Further, since our solution to the problem has a linear dependency on the size of the circuit, and circuits can be exponentially more succinct than DFAs representing the same specification, our protocol provides an exponential advantage. To their advantage, both the related works deal with malicious Systems---a setting our protocol does not extend to.

Since our work and their work deal with different settings, it is not ideal to compare either work with each other directly. However, since certain specifications considered by them can be of interest to us and vice-versa, we run our protocols on their specifications and extrapolate the time taken by theirs on ours in \cref{sec:experiments}, where we show that our protocol handles all specs in the work of Bano et al.~\cite{BMMBWS22} within a few hundred milliseconds. %, while their protocol would have a monitoring latency of a few hours even monitor a single event on the smallest example we considered (on which our protocol take at most a few seconds). 

%Due to similar underlying techniques of Waga et al~\cite{WMSMBBS24} as Banno et al~\cite{BMMBWS22} (converting to a DFA and using FHE), our comparisons results extend to their work too. 

Our protocols for privacy-preserving runtime verification draw inspiration from several works on two-party computation~\cite{GMW87,Yao82}, and more specifically private function evaluation~\cite{KM11,MS13,BBKL19,LWY22}. Our Hidden Specification Protocol draws specifically on the recent work of Liu, Wang, and Yu~\cite{LWY22}. In their work they provide a way to repeatedly compute the same function several times under standard IND-CPA and DDH assumptions against \emph{covert adversaries}, a more adversarial setting than ours. However, Liu et al's protocol does not allow for hiding an internal monitor state (reactive functionalities).
\begin{figure*}[h]
\centering
\begin{tikzpicture}[scale = 1, transform shape]
    \tikzset{
        pics/fbloop/.style n args={4}{  % Takes 4 arguments
          code={
          \def\width{#1}   % First argument
          \def\x{#2}       % Third argument
          \def\y{#3}       % Fourth argument
          \def\word{#4}
          \node[draw, rectangle, minimum width=\width cm, minimum height=\width cm] (box) at (\x,\y) {\word};
          \draw[->]
            (\x+0.5*\width,\y+0.2*\width)
            -- ++(0.3*\width,0)     % Move right
            -- ++(0,0.6*\width)  % Move up
            -- ++(-1.6*\width,0)    % Move left
            -- ++(0,-0.6*\width) % Move down
            -- ++(0.3*\width,0);    % Move right
           \draw[->] (\x+0.5*\width,\y-0.2*\width) --  (\x+\width,\y-0.2*\width);
           \draw[->]   (\x-\width,\y-0.2*\width) -- (\x-0.5*\width,\y-0.2*\width);
          }
        }  
    }

    \tikzset{
        pics/fbloopnoout/.style n args={4}{  % Takes 4 arguments
          code={
          \def\width{#1}   % First argument
          \def\x{#2}       % Third argument
          \def\y{#3}       % Fourth argument
          \def\word{#4}
          \node[draw, rectangle, minimum width=\width cm, minimum height=\width cm] (box) at (\x,\y) {\word};
          \draw[->]
            (\x+0.5*\width,\y+0.2*\width)
            -- ++(0.3*\width,0)     % Move right
            -- ++(0,0.6*\width)  % Move up
            -- ++(-1.6*\width,0)    % Move left
            -- ++(0,-0.6*\width) % Move down
            -- ++(0.3*\width,0);    % Move right
           % \draw[->] (\x+0.5*\width,\y-0.2*\width) --  (\x+\width,\y-0.2*\width);
           \draw[->]   (\x-\width,\y-0.2*\width) -- (\x-0.5*\width,\y-0.2*\width);
          }
        }  
    }

    \tikzset{
        pics/fbloopnoin/.style n args={4}{  % Takes 4 arguments
          code={
          \def\width{#1}   % First argument
          \def\x{#2}       % Third argument
          \def\y{#3}       % Fourth argument
          \def\word{#4}
          \node[draw, rectangle, minimum width=\width cm, minimum height=\width cm] (box) at (\x,\y) {\word};
          \draw[->]
            (\x+0.5*\width,\y+0.2*\width)
            -- ++(0.3*\width,0)     % Move right
            -- ++(0,0.6*\width)  % Move up
            -- ++(-1.6*\width,0)    % Move left
            -- ++(0,-0.6*\width) % Move down
            -- ++(0.3*\width,0);    % Move right
            \draw[->] (\x+0.5*\width,\y-0.2*\width) --  (\x+\width,\y-0.2*\width);
           %\draw[->]   (\x-\width,\y-0.2*\width) -- (\x-0.5*\width,\y-0.2*\width);
          }
        }  
    }

    \tikzset{
        pics/fbloopsystem/.style n args={4}{  % Takes 4 arguments
          code={
          \def\width{#1}   % First argument
          \def\x{#2}       % Third argument
          \def\y{#3}       % Fourth argument
          \def\word{#4}
          \node[fill=black!70,draw, rectangle, minimum width=\width cm, minimum height=\width cm] (box) at (\x,\y) {\word};
          \draw[->]
            (\x+0.5*\width,\y+0.2*\width)
            -- ++(0.3*\width,0)     % Move right
            -- ++(0,0.6*\width)  % Move up
            -- ++(-1.6*\width,0)    % Move left
            -- ++(0,-0.6*\width) % Move down
            -- ++(0.3*\width,0);    % Move right
            %\draw[->] (\x+0.5*\width,\y-0.2*\width) --  (\x+\width,\y-0.2*\width);
           \draw[->]   (\x-\width,\y-0.2*\width) -- (\x-0.5*\width,\y-0.2*\width);
          }
        }  
    }
        
%The setup phase

    \draw[rounded corners=5pt, thick, fill=gray!20] (-0.9,2.7) rectangle ++ (1.8,0.9);
    \drawlock{-0.5}{3.1};
    \pic {fbloop={0.4}{0.3}{3.1}{$\Phi$}};
    \node at (0.3,3.5) {\tiny nextstate};%4.1,2.1
    \node at (0.7,2.85) {\scalebox{0.7}{\tiny flag}};

    \node at (0,2.4) {\tiny Round 0: Setup};
%THE ARROW for sending monitor spec

%System boxes
    \draw[thick, rounded corners=5pt] (-6.9,3) rectangle (-1.1,0);
    \node at (-3.5,-0.3) {System};
    \draw[dashed, rounded corners=5pt] (-6.7,2.7) rectangle (-3.7,0.3) node[at={(-5.2,0.2)}] {\tiny Unmonitored system};
    \draw[dashed, rounded corners=5pt, fill=gray!20] (-3.5,2.7) rectangle (-1.2,0.3) node[at={(-2.3,0.2)}] {\tiny Protocol: System side};

    % Feedback systems
    \pic {fbloopsystem={1.2}{-5.2}{1.3}{}};
    \node[white] at (-5.2,1.6) {\scriptsize Unknown};
    \node[white] at (-5.2, 1.4) {\scriptsize system};
    \node[white] at (-5.2, 1.2) {\scriptsize state-};
    \node[white] at (-5.2, 1) {\scriptsize machine};
    
    \pic {fbloopnoin={0.9}{-2.4}{1.2}{$\pi_\Pc$}};

    \draw[->] (-4.6,1) -- (-2.85,1);
    \draw[fill=white, white] (-4.3,0.4) rectangle ++ (0.82,0.5);
    \node at (-3.7,0.8) {\tiny Observable};
    \node at (-3.7,0.6) {\tiny output};
    
    \draw[rounded corners=5pt,thick] (1,0) rectangle (5,3);
   \node at (3.5,-0.3) {Safety monitor};

    \draw[dashed, rounded corners=5pt] (4.8,2.7) rectangle (3.4,1.6) node[at={(4.2,1.5)}] {\tiny Monitor Spec.};
    \draw[dashed, rounded corners=5pt,fill=gray!20] (3.2,2.7) rectangle (1.2,0.3) node[at={(2.3,0.2)}] {\tiny Protocol: Monitor side};
    
  %  \node at (0,2) {\lockpicture};
    \draw[->, thick] (3.5,2.3) %-- ++ (0,0.4)
                                -- ++ (-0.9,0)
                                -- ++ (0,-0.2);

%THE ARROW for Setup phase
    \draw[->, thick, double] (1.8,2.2) -- ++ (0,0.4)
                               -- ++ (-4,0)
                               -- ++ (0,-0.4);    

%MONITOR SPECS
    \pic {fbloopnoout={0.9}{2.4}{1.2}{$\pi_\Mc$}};
    \pic {fbloop={0.4}{4.1}{2.1}{$\Phi$}};
    \draw[->] (2.85,1.02) -- ++(1,0)
                          -- ++(0,-0.4);
    \node at (4.1,0.9) {\tiny flag};
    \node at (4.1,2.5) {\tiny nextstate};%4.1,2.1
    \node at (4.5,1.85) {\scalebox{0.7}{\tiny flag}};
 %Messages sent at round   
    % Fork line for messages from System side
    \draw[dotted,thick] (-1.5,0.5) -- (-1.5,2);
    % Fork line for messages at Monitor side
    \draw[dotted,thick] (1.5,0.5) -- (1.5,2);
    \draw[->, thick] (-1.5,2) -- (1.5,2) node[midway, below] {\tiny Round 1 message};
    \draw[->, thick] (-1.5,1.5) -- (1.5,1.5) node[midway, below] {\tiny Round 2 message};
    \draw[->, thick] (-1.5,1.02) -- (1.5,1.02) node[midway, below] {\tiny Round 3 message};
    \node at (0,0.3) {$\vdots$};
\end{tikzpicture}
    \caption{The architecture of a system and a safety monitor with a privacy-preserving protocol: $(\pi_\Pc,\pi_\Mc)$}
    \label{fig:monitoringsetup}
\end{figure*}

% \section{What is privacy?}\label{sec:privacy}
% \input{2privacyIntro}

\section{Privacy-preserving monitoring}\label{sec:PrivacyMonitoring}
We describe our setting under which we perform privacy-preserving monitoring. We assume that there are two parties: the System (S) and the Monitor (M). We assume the two parties are \emph{semi-honest}, that is, they follow the protocol, but might examine the transcripts to learn more information, as opposed to malicious parties that deviate from the protocol to obtain information. 

\paragraph*{System's observables.} We assume that the System holds data that is represented as a string that is $s$-bits long. The System's observable sequence is temporal, and therefore, at each round $r$, the data output by the System is represented by $\sigma[r]$, and we use $\sigma$ to represent the $\ell$-length sequence $\sigma[1],\dots,\sigma[\ell]$ of data.

\paragraph{Monitor's input.} The monitor has a monitorable specification $\Phi \subseteq \left( {\{0,1\}^s} \right) ^\omega$. Although other monitoring specifications can be considered~\cite{BFFR18}, we consider the most important ones: safety specifications~\cite{KKLSV02}. 
We assume that the specification $\Phi$ is represented as a state machine, that is, it starts at a state represented by $\mu[1]$, and at every round $r$ where the observable output of the System is $\sigma[r]$, the Monitor's state is updated by a \emph{deterministic} function $\mu[r+1] = \mupdate(\sigma[r],\mu[r])$.
The Monitor has a function $\isbad(\sigma[r],\mu[r])$ that is $1$ if a prefix is not in the specification, and $0$ otherwise.
We assume that such specifications are described as a circuit $\Cc$ that encodes both functions $\mupdate$ and $\isbad$. We call an \emph{initialised circuit} as the pair of circuit and initial monitor state $(\Cc,\mu[1])$. %where $\mu[1]$ is the initial monitor state. 

\paragraph{Ideal settings.} 
We describe the ideal settings with a trusted third party that our protocols must emulate. We describe two settings: one where the Monitor's specification (and therefore the circuit $\Cc$ representing it) is not private, and another where $\Cc$ is also kept private. 
If the specification is known to both parties, the Monitor hands over $\mu[1]$ to the trusted third party in the first round. If the specification is secret from the System, the Monitor also hands over $\Cc = (\mupdate,\isbad)$. 
At every round $r$, the System hands over observable system output $\sigma[r]$ to this trusted third party. The trusted third party computes and returns $\tau[r] = \isbad (\mu[r],\sigma[r])$ and internally stores  the value $\mu[r+1] = \mupdate(\mu[r],\sigma[r])$.
% after computing it. 
These two settings are described pictorially in \cref{fig:monitoringOpen,fig:monitoringHiding}. 
%The simulators for each party that we define to prove correctness exist in this ideal world, with access only setting. 
% \paragraph{Secure monitoring without specification hiding.}
% The circuit $\Cc$ is known to both in this case. Therefore, in the first round, the Monitor hands over $mu[1]$ to the trusted third party. 
% The System, at every step, hands in input $\sigma[t]$ to this trusted third party. The trusted third party returns $\tau[t] = \isbad (\mu[t],\sigma[t])$ and keeps to itself the value $\mu[t+1] = \mupdate(\mu[t],\sigma[t])$ after computing it. 

% \paragraph{Secure monitoring with specification hiding.}
% Since the function $\Cc$ is only known to the monitor, in the first round, the monitor hands over $\Cc = (\mupdate,\isbad)$ to the trusted third party along with the starting state $\mu[1]$. Similar to the earlier case, at every step, the System hands in its input $\sigma[t]$ at round $t$. The trusted third party returns $\tau[t] = \isbad (\mu[t],\sigma[t])$ and keeps to itself  $\mu[t+1] = \mupdate(\mu[t],\sigma[t])$. 
% The $i^\text{th}$ output is the $1$-bit string $\tau[i]$ where $\Cc(\mu[i],\sigma[i]) = (\mu(i+1),\tau[i])$. Therefore, the monitor input and the System input define an infinite sequence of $1$-bit strings $\tau[1],\tau[2],\dots,\tau[i],\dots$. 

% \begin{definition}[]
% \end{definition}
% Our objective is to achieve fast protocols with a low number of exchanges per round that emulates the ideal world, without access to a trusted third party. Therefore, the protocol must be sound, complete, and secure. 

\subsubsection{A monitoring protocol.} A monitoring protocol is a pair of instructions $\pi= \tpl{\pi_\Pc,\pi_\Mc}$, one for the System and one for the Monitor, such that at each round~$r$, the Monitor and the System send messages to each other as dictated by the protocol. 
A protocol has a round~$0$, which we refer to as the \emph{setup phase}.
Furthermore, in a fixed round, the only messages sent according to the protocol are from the System to the Monitor. The Monitor  (optionally) responds with ``proceed'' or ``terminate''. 
We add the additional restriction of only one message per round of the monitoring protocol, since the process of monitoring needs to be relatively lightweight and not involve several exchanges within each round. 

We first define the correctness of a monitoring protocol $\pi = \tpl{\pi_\Pc,\pi_\Mc}$.
\begin{definition}[Correctness of monitoring protocols with semi-honest parties]
    A protocol $\pi=\tpl{\pi_\Pc,\pi_\Mc}$ is a \emph{correct monitoring protocol} if for any specification described by an initialised circuit $(\Cc,\mu[1])$, System sequence $\sigma$, and security parameter $n$, the output of the Monitor computed in an execution of the protocol $\pi$ is equal to the output of the Monitor in the ideal setting on the same inputs with high probability, that is, probability $>\tpl{1-\frac{1}{n^d}}$ for any fixed~$d\in\Nb$. 
\end{definition}
\subsubsection*{Secure monitoring protocol.} 
We define security of a monitoring protocol based on a comparison between the real and the ideal setting defined below. 
% A monitoring protocol is secure if there are simulators $\Sc_M$ and $\Sc_P$ in the ideal setting (where the Monitor has access only to the outputs $\tau$: one bit $\tau[r]$ at round $r$) and the System, does not have access to any data other than the number of rounds) that can produce a transcript using the output provided by the third party so that these transcripts are computationally indistinguishable from the transcript produced in real execution of the protocol.
% For the definitions of the real and ideal view, we assume that $x_M$ represents the monitor inputs, where $x_M=(\Cc,\mu[1])$ for the case where the circuit $\Cc$ is a secret and $x_M = \mu[1]$ if the specification is known to both, and $\sigma$ represents the System input, that is, the sequence of System states, and $n$ represents the security parameter.
\paragraph{Real view} The \emph{view of the Monitor on protocol $\pi$}, written as $\view_\Mc^\pi(x_M,\sigma,1^n)$ for an execution of a protocol $\pi$ on inputs $x_M$ of the Monitor and System observable output sequence $\sigma$ is defined as a tuple consisting of 
the Monitor's input,  
the internal random bits that were used,
and the messages $m_1,\dots,m_\ell$ received by the Monitor, during the protocol execution, where $m_j$ is the $j^\text{th}$ message. 

\paragraph{Ideal view.} The \emph{simulated or ideal view of the monitor on protocol $\pi$}, written as $\Sc^\ideal_{\Mc,\pi}(x_M,\sigma,1^n)$, is a transcript generated by a simulator $\Sc_\Mc$ (a probabilistic polynomial-time machine) that has access only to the 
inputs and outputs received by the Monitor in the ideal setting. We drop $\pi$ from the subscript, if the protocol is clear from context.  
The view $\view_\Pc^\pi(x_M,\sigma,1^n)$ and the ideal view $\Sc_\Pc^\ideal$ of the System are defined in a similar way. 
\begin{definition}[Secure monitoring with semi-honest parties]
    A protocol $\pi=(\pi_\Pc,\pi_\Mc)$ is a \emph{secure monitoring protocol without specification hiding} for a  specification represented by a circuit $\Cc$ if there are simulators (probabilistic polynomial time machines) $\Sc_\Mc$ and $\Sc_\Pc$ such that for any initial monitor states $\mu[1]$, System's observable sequences and security parameters $n\in\Nb$, we have
    \begin{align*}
        \set{\Sc_\Mc^\ideal(\mu[1],\sigma,1^n)}_{\mu[1],\sigma} &\indistinguishable \set{\view^\pi_\Mc(\mu[1],\sigma,1^n)}_{\mu[1],\sigma}\\
        \set{\Sc_\Pc^\ideal(\mu[1],\sigma,1^n)}_{\mu[1],\sigma}&\indistinguishable \set{\view^\pi_\Pc(\mu[1],\sigma,1^n)}_{\mu[1],\sigma}.
    \end{align*}
    A protocol $\pi=(\pi_\Pc,\pi_\Mc)$ is a \emph{secure monitoring protocol with specification hiding} if there are simulators (probabilistic polynomial time machines) $\Sc_\Mc$ and $\Sc_\Pc$ such that for all initialised circuits $(\Cc,\mu[1])$, System observable sequence $\sigma$, and security parameters $n\in\Nb$, we have
    \begin{multline*}
        \set{\Sc_\Mc^\ideal((\Cc,\mu[1]),\sigma,1^n)}_{(\Cc,\mu[1]),\sigma}  \indistinguishable\\\set{\view^\pi_\Mc((\Cc,\mu[1]),\sigma,1^n)}_{(\Cc,\mu[1]),\sigma}
    \end{multline*}
    \begin{multline*}
        \set{\Sc_\Pc^\ideal((\Cc,\mu[1]),\sigma,1^n)}_{(\Cc,\mu[1]),\sigma}\indistinguishable\\ \set{\view^\pi_\Pc((\Cc,\mu[1]),\sigma,1^n)}_{(\Cc,\mu[1]),\sigma}.
    \end{multline*}
\end{definition}
    Not that we use the symbol $\indistinguishable$ to denote the standard definition of computational indistinguishability~\cite{BM82}.  
We draw a pictorial representation of the ideal setting of a monitoring protocol. The simulator for each party resides in this is ideal setting, where each party has only the information it receives from the trusted third-party.
\begin{figure}
    \centering
            \begin{tikzpicture}[scale=1, node distance=1cm, >=stealth]

% Draw the box (algorithm)
\node[draw, minimum width=4cm, minimum height=4.4cm] (box) at (0,0) {};
%\node at (0,-0.6) {Monitoring};
% Main input arrow from the top-right
\draw[thick,->] (4.5, 2) -- node[above] {$\mu[1]$} (1.1, 2);
%\draw[thick,->] (4.5, 2.5) -- node[above,xshift=6mm] {$\Cc\text{ if spec. is secret}$} (1.1, 2.5);

\foreach \i/\label in {1/1, 2/2, 3/3, 4/4, 5/5} {
    \draw[->] (-3.3, {2 - 0.6*\i}) -- node[above, sloped] {$\sigma[{\label}]$} (-1.9, {2 - 0.6*\i});
}
\node[text width=1cm] at (-2.1,-1.4)     {$\vdots$};

\foreach \i/\label in {1/2, 2/3, 3/4, 4/5, 5/6} {
    \node at (0, {2.0 - 0.6*\i})  {$\Cc(\sigma[{\i}],\mu[{\i}]) = (\mu[{\label}],\tau[\i])$}; %(-1.6, {-0.3 - 0.6*\i});
}
%\node[text width=1cm] at (-2,-1)     {$\vdots$};

\foreach \i/\label in {1/1, 2/2, 3/3, 4/4, 5/5} {
    \draw[->] (1.9, {2.0 - 0.6*\i}) -- node[above, sloped] {$\tau[{\label}]$} (3.3, {2.0 - 0.6*\i});
}
\node[text width=1cm] at (3.1,-1.4)     {$\vdots$};

\foreach \i in {2,3,4,5,6} {
    \draw[->] (0.6, {2.4 - 0.6*\i}) to[out=-90,in=90] (-0.3, {2.1 - 0.6*\i});
}
\node[text width=1cm] at (0.4,-1.5)     {$\vdots$};
\node[text width=4cm] at (0.9,-2)     {Trusted third party};
\node[text width=4cm] at (-1.1,-2)     {System};
\node[text width=4cm] at (4.2,-2)     {Monitor};
\end{tikzpicture}

    \caption{Ideal setting with a trusted third party for monitoring where the specification is \emph{not} a secret and circuit $\Cc$ is known to all.}
    \label{fig:monitoringOpen}
\end{figure}
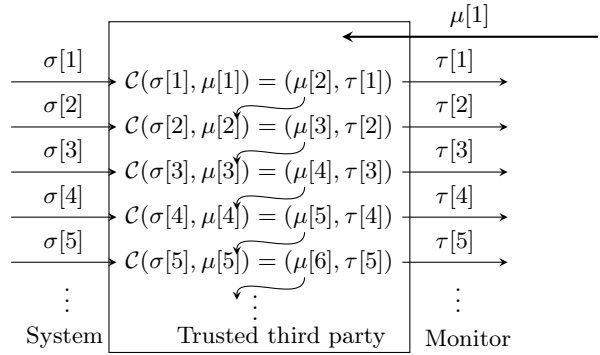
\begin{figure}
    \centering
            \begin{tikzpicture}[scale=1, node distance=1cm, >=stealth]

% Draw the box (algorithm)
\node[draw, minimum width=4cm, minimum height=5cm] (box) at (0,0.2) {};
%\node at (0,-0.6) {Monitoring};
% Main input arrow from the top-right
\draw[thick,->] (4.5, 2) -- node[above] {$\mu[1]$} (1.1, 2);
\draw[thick,->] (4.5, 2.5) -- node[above,xshift=6mm] {$\Cc$} (1.1, 2.5);

\foreach \i/\label in {1/1, 2/2, 3/3, 4/4, 5/5} {
    \draw[->] (-3.3, {2 - 0.6*\i}) -- node[above, sloped] {$\sigma[{\label}]$} (-1.9, {2 - 0.6*\i});
}
\node[text width=1cm] at (-2.1,-1.4)     {$\vdots$};

\foreach \i/\label in {1/2, 2/3, 3/4, 4/5, 5/6} {
    \node at (0, {2.0 - 0.6*\i})  {$\Cc(\sigma[{\i}],\mu[{\i}]) = (\mu[{\label}],\tau[\i])$}; %(-1.6, {-0.3 - 0.6*\i});
}
%\node[text width=1cm] at (-2,-1)     {$\vdots$};

\foreach \i/\label in {1/1, 2/2, 3/3, 4/4, 5/5} {
    \draw[->] (1.9, {2.0 - 0.6*\i}) -- node[above, sloped] {$\tau[{\label}]$} (3.3, {2.0 - 0.6*\i});
}
\node[text width=1cm] at (3.1,-1.4)     {$\vdots$};

\foreach \i in {2,3,4,5,6} {
    \draw[->] (0.6, {2.4 - 0.6*\i}) to[out=-90,in=90] (-0.3, {2.1 - 0.6*\i});
}
\node[text width=1cm] at (0.4,-1.5)     {$\vdots$};
\node[text width=4cm] at (0.9,-2)     {Trusted third party};
\node[text width=4cm] at (-1.1,-2)     {System};
\node[text width=4cm] at (4.2,-2)     {Monitor};
\end{tikzpicture}

    \caption{Ideal setting with a trusted third party for monitoring where the specification is a secret.}
    \label{fig:monitoringHiding}
\end{figure}
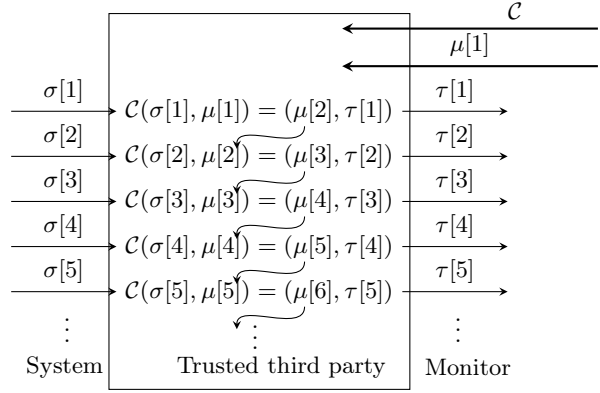

\section{Protocols for privacy-preserving monitoring}\label{sec:protocol}
\subsubsection*{Warm up--Yao's garbling with one gate.}
The most fundamental tool used in secure two-party computation is attributed to Yao and was dubbed Yao's garbling by Goldreich, Micali, and Wigderson~\cite{GMW87}. 
We first describe a toy-version of the problem posed by Yao. 
Consider two parties $A$ and $B$. Can we have a secure protocol where party $A$ has two bits, and $B$ wants to know the output of gate $G$ computing a Boolean function over two bits, where gate $G$ is known to both parties? Yao's garbling produces a simple solution to this as follows. 

Party $A$ starts by randomly generating strings of a fixed length $L^0$, $L^1$, $R^0$, $R^1$, $S^0$ and $S^1$. The strings $L^0$, $L^1$ correspond intuitively to each value $0$ and $1$ taken by the left input wire of the gate, respectively. Similarly $R^0$ and $R^1$ correspond to the values taken by the right input wire, and $S^0$ and $S^1$ to the output wire of the gate taking values $0$ and $1$, respectively. 
After the labelling step, Party $A$ \emph{encrypts} the label of the output of $G$ using keys that correspond to the input. So, if party $B$ had keys that corresponds to input $(0,1)$, then it can only open the output that would represent $G(0,1)$.
More formally, party $A$ calls 
a subroutine $\textbf{\encYao}_G$ that generates the \emph{garbled gate} which consists of four cipher-texts as follows
\begin{multline}
  \textbf{\encYao}_G  \left( [L^0, L^1], [R^0, R^1], [S^0, S^1] \right)
  \coloneqq \\
    \left\{ \Enc_{L^\alpha,R^\beta}  \left(
        S^{G(\alpha,\beta)} \right) \right\}_{\alpha,\beta \in \{0,1\}} \tag{\$}\label{eq:yao}
\end{multline}
and sends it to party $B$, but the encrypted messages are sent in random order. 
That is, if gate $G$ was an \emph{AND} gate, then the garbled gate would be a random order of the elements $\{\Enc_{L^0,R^0}(S^0),\Enc_{L^1,R^0}(S^0)$, $\Enc_{L^0,R^1}(S^0),\Enc_{L^1,R^1}(S^1)\}$. Party $A$ also sends $\seq{S^0,0}$ and $\seq{S^1, 1}$ to indicate to $B$ that $S^0$ corresponds to bit $0$ and $S^1$ to bit 1. If party $A$'s input for the left and the right gate are $\ell,r\in \{0,1\}$, then she also sends the random labels $L^\ell$ and $R^r$.  

Party $B$ then uses $L^\ell$ and $R^r$ as keys to open the four ciphertexts; with very high probability, only one of the four will open, which corresponds to exactly $S^{G(\ell,r)}$. If $S^{G(\ell,r)} = S^0$, he concludes $G(\ell,r) = 0$, and $1$ if $S^{G(\ell,r)} = S^1$. 

Now, consider the same situation with a gate $G$ over two bits, however, one bit of input is known to party $A$ and the other bit is known to party $B$. Can we modify the above protocol to  ensure that party $B$ learns $G(\alpha,\beta)$ without learning $\alpha$, where $\alpha$ is party $A$'s input and $\beta$ is party $B$'s input? 

\subsubsection*{Oblivious Transfer.}
The oblivious transfer functionality ensures that for two parties where one party, say party $A$, has two strings $x_0,x_1\in\{0,1\}^n$ and the other party, that is party $B$ has one bit $\beta\in\{0,1\}$, \texttt{OT} transfers the bit $x_\beta$ to party $B$, without revealing to $A$ the bit $\beta$, or the bit $x_{1-\beta}$ to party $B$. 
The functionality $\texttt{OT}$ is defined as a functionality from $\{0,1\}^{2n}\times \{0,1\}\mapsto \{0,1\}^n$, where $\texttt{OT}((x_0,x_1),\beta) = x_\beta$. Secure protocols for $\texttt{OT}$ have been known since the 1980s, with the earliest forms proposed by Rabin~\cite{Rab81}, and later improvements and alternative protocols by Killian~\cite{Kil88}, and also by Bellare and Micali~\cite{BM89} and several others. 

Using a protocol for the functionality Oblivious transfer ($\texttt{OT}$) as a sub-protocol, we can now modify the previously described protocol for garbling circuits when party $B$ holds bit $\beta$. 
The garbling protocol proceeds as follows. Party $A$ similarly finds labels $L^0,L^1,R^0,R^1,S^0,$ and $S^1$ corresponding to $0$ or $1$ for the wires, and prepares the garbled gates as described in the previous step. After this, party $A$ and party $B$ run a protocol for Oblivious transfer where $A$ has the labels for the input wire corresponding to $B$'s input $\beta$, say $R_0$ and $R_1$, and $B$ has the input bit. At the end of the OT protocol, $B$ would receive $R_\beta$. Later, party $A$ also sends $L_\alpha$. This way, out of the four ciphertexts with the keys $L_\alpha$ and $R_\beta$, party $B$ can only open the one ciphertext that contains the key $S^{G(\alpha,\beta)}$. He matches this string obtained with $S^0$ or $S^1$, both received from $A$. 
\subsubsection*{Yao's garbling with a circuit.}
Consider the same problem, however, instead of just a simple gate $G$, it is a circuit $\Cc$ that party $B$ wants to use to evaluate on party $A$'s input. Then for each wire $W$ in the circuit, party $A$ similarly prepares random keys to represent the value $W^0$ and $W^1$. Whenever an output wire feeds into an input for a gate, party $A$ uses the same random keys for the output and input wire. 
For each gate $G$ in the circuit $\Cc$, party $A$ computes $\textbf{encYao}_G$ using the corresponding gate's inputs and outputs wire labels. Party $A$ further sends the labellings generated for the output wire along with whether they correspond to $0$ or $1$ to $B$ and the input labels of the wires which correspond to her input. 

Party $B$ can use the keys corresponding to the input of party $A$ and unlock the gates to learn the keys to the next gates in a bottom-up manner and work through the circuit until he obtains the keys corresponding to the output wires. Then party $B$ can match the keys with the corresponding values sent by $A$. 

\subsubsection*{Conventions and notations for protocols.}
We describe our protocol for monitoring, which is a reactive functionality.  Any description of a specification is converted into the circuit described below. %\paragraph{Circuit description.} 
% We assume, for simplicity,
% that all circuits in $\Cc$ are composed
% solely of \textsc{nand} gates. 
\begin{itemize}
    \item \textit{Circuit size.} Both the System and the Monitor agree that there are $c$ many gates in the circuit $\Cc$ describing the specification. The circuit has $s+m$ many inputs where the first $m$ of the inputs correspond to the Monitor's state and the rest $s$ to the System's input, which are the observable output.  There are $m+1$ output wires. 
    \item  \textit{Circuit structure.} We assume that every gate is a NAND gate with exactly two wires that feed in and one that feeds out. We often use the terms \emph{left} and \emph{right} feed-in wire to differentiate between such wires for a fixed gate. 
    %Every feed-out wire can be connected to zero or more feed-in wires,
    %but every feed-in wire is connected to exactly one feed-out wire;
    %for this matter, we also take every input bit to $\Cc$ to be a feed-out wire.
    We assume that any wire that is an output of a gate that is also an output wire of the circuit $\Cc$ does not connect back to any feed-in wires. 
\end{itemize}
\paragraph{Wires.} As discussed, there are two kinds of wires: \emph{feed-out} and \emph{feed-in} wires. The idea is that feed-in wires \emph{feed into a gate} and feed-out wires \emph{feed out of a gate}. Every feed-out wire can be connected to zero or more feed-in wires, but every feed-in wire is connected to exactly one feed-out wire; for this matter, we also take every input wire to circuit $\Cc$ to be a feed-out wire. 

\paragraph{Naming the wires.}\, There are $I = 2c$ feed-in wires (two feeding into each gate) and 
$c+s+m$ feed-out wires (one feed-out wire for each gate, and $s+m$ input wires of the circuit).
We use  $\iota_1,\iota_2,\dots,\iota_I$, to represent the feed-in wires and use $\omega_1,\omega_2,\dots,\omega_{c+s+m}$ to represent the feed-out wires. 
Among these feed-out wires, $m + 1$ wires are circuit output wires and the other $O = c+s-1$ feed-out wires are not circuit-output wires. 
The first $m+s$ of the feed-out wires $\omega_1,\dots,\omega_{m+s}$, represent the input wires to the circuit. The first $m$ corresponds to the Monitor's input and the next $s$, the System's input to the protocol.

\paragraph{Gates.}\, We call the $c$ gates $G_1,G_2,\dots,G_c$. The gate $G_j$ has $\iota_{2j-1}$ as 
its left feed-in wire and $\iota_{2j}$ as its right feed-in wire. 
For each $j\in\{1,\dots,c\}$, the wire $\omega_{m+s+j}$ denotes the feed-out wire of gate $G_j$.
The feed-out wires $\omega_{m+s},\dots,\omega_{O+m+1}$ denote the output wires of the circuit. Therefore, the last $m+1$ feed-out wires of gates $G_{c-m-1},\dots,G_{c}$ correspond exactly to the last $m+1$ output wires $\omega_{O+1},\dots,\omega_{O+m+1}$, respectively. The output wires of the monitor state that are used for feed-back into the next round, are represented by $\omega_{O+1},\omega_{O+2},\dots,\omega_{O+m}$ and the output wire corresponding to $\isbad$ is represented by $\omega_{O+m+1}$. See the black text in \cref{fig:exponentsingoingwires} for a pictorial representation of the names of the wires.

\subsection{First protocol - Monitoring without specification hiding}
We first provide a conceptually simpler protocol for when the specification and the circuit $\Cc$ representing it are known to both parties. The internal state computed by the circuit is still kept secret. %The output of the circuit is exactly $m+1$ bits where the first $m$ bits of the output ``feed-back'' into the input in the next round and the last one bit corresponds to a safety check by the monitor to decide if the specification has been violated or not. 
The protocol is similar to Yao's garbling with a circuit described earlier. The main modification here is to reuse the labels for the output wires from one round for the Monitor's state for the input component of the Monitor's state in the next round. This operation is described in line $2$ of the protocol. To obtain the output, the Monitor uses the keys to unlock the circuits from  bottom-up, similar to Yao's protocol.

\begin{algorithm}
\floatname{algorithm}{Open Specification Protocol}
\renewcommand{\thealgorithm}{}
\caption{Secure monitoring without specification hiding}\label{protocol1}
\begin{algorithmic}[1]
\REQUIRE Both parties the Monitor and the System have circuit $\Cc$ as described and agree on a security parameter $n$. Additionally, the Monitor holds an $m$ bit-string $\mstring[1] = \mstring_1[1]\cdot\mstring_2[1]\dots\cdot\mstring_m[1]$ that correspond to the valuation of the initial state, and at each round $r$, the System holds a $s$-bit string $\pstring[r] = \pstring_1[r]\cdot\pstring_2[r]\dots\pstring_s[r]$ that corresponds to the current System state. 
\ENSURE At round $r$, the Monitor learns \emph{only} the last output bit of the circuit $\Cc$ on the string $(\pstring[r], \mstring[r])$, and neither party learns the value of $\mstring[r]$ for $r>1$.
\STATE $S$ : For all out-going wires $\omega_i$ other than the wires corresponding to the input of the Monitor, i.e., $i\in\{m+1,\dots,c+s+m\}$, the System generates two random strings $w^0_i[r]$ and $w^1_i[r]$ corresponding to it. 
\STATE $S$ : For the out-going wire $\omega_i$ that also correspond to the $m$ input wires (for all $i\in \set{1,\dots m}$), the System acts depending on round $r$:
\begin{itemize}
    \item \textbf{for $r = 1$,} the system chooses two string $w^0_i[1]$ and $w^1_i[1]$ uniformly at random, and 
    \item \textbf{for $r>1$,}  
    it reuses the labels of the feedback wires from the previous round i.e, $w^0_i[r] \gets w^0_{O+i}[r-1]$ and $w^1_i[r] \gets w^1_{O+i}[r-1]$. 
\end{itemize}
\STATE $S$:  For each in-going wire $\iota_i$ (for all $i\in \{1,\dots,2c\}$), the System assigns the labels from the output wire $\omega_j$ that is connected to it, i.e., $u^0_i[r] = w^0_j[r]$ and $u^1_i[r] = w^1_j[r]$ if the output wire $\omega_j$ is connected to the feed-in wire $\iota_i$. 
\STATE $S$: The System computes for each gate $G_j$, $\textbf{encGG}_j[r]$ and sends the list $\textbf{encGG}_j[r]$, the labels of the feed-in wires to open the gates for the values corresponding to each bit in $\pstring[r]$, that is, 
  $w^{b_1}_{1}[r],\dots,w^{b_s}_{s}[r]$ where $b_i = \pstring_i[r]$, and both the labels of the output flag bit $w^0_{O+m+1}$ and $w^1_{O+m+1}$, and $\textbf{encGG}_j[r]$ is 
  \[
  \textbf{encGG}_j[r] = \textbf{encYao}_{G_j} \begin{pmatrix}
     \left[u_{2j-1}^0[r], u_{2j-1}^1[r]\right], \\
     \left[u_{2j}^0[r], u_{2j}^1[r]\right], \\
    \left[w_{m+s+j}^0[r],   w_{m+s+j}^1[r]\right]
  \end{pmatrix}
  \]
 \STATE $S,M$ : For $r=1$, the System also uses oblivious transfer to send the labels $w^{c_{1}}_{1}[1],\dots,w^{c_m}_{m}[1]$ where $c_{i}$ is the $i^{\text{th}}$ bit of $\mu[1]$ representing the Monitor input. 
 \STATE $S$ : the Monitor ungarbles the circuit using previously obtained Monitor-state output labels and System's new labels to compute the Monitor-state and flag labels.
 \end{algorithmic}
\end{algorithm} 
The following theorem shows that \protocolone~is correct and secure, assuming that the encryption is secure under the chosen plaintext attack (CPA) (\cref{assumption:CPA}~\cite{Bir11}, in \cref{app:protocol1}).
\begin{restatable}{theorem}{Protocolonesecure}\label{thm:Protocol1secure}
        Assuming the chosen encryption $\textbf{Enc}$ is secure under the CPA model and the oblivious transfer protocol is secure in the presence of semi-honest adversaries, \protocolone~is a \emph{correct and secure} monitoring protocol without specification hiding, when both parties are semi-honest, and the number of rounds is a fixed polynomial in the security parameter. 
\end{restatable}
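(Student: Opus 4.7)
I would prove the two parts of the statement separately: first correctness, then security, the latter constructing one simulator per party and showing indistinguishability from the real views via a hybrid argument. Both arguments proceed by induction over the rounds, exploiting the fact that the number of rounds~$L$ is polynomial in the security parameter~$n$.

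\textbf{Correctness.} I would argue by induction on~$r$ that after round~$r$ the Monitor (i)~outputs exactly $\isbad(\mu[r],\sigma[r])$, and (ii)~holds the labels $w^{\mu_i[r+1]}_{O+i}[r]$ for each $i\in\{1,\dots,m\}$, i.e., the labels that correspond to the \emph{true} next monitor state. The base case $r=1$ invokes the correctness of the OT subprotocol: the Monitor obtains $w^{\mu_i[1]}_{i}[1]$ exactly. The inductive step invokes the correctness of Yao's garbled gate evaluation: given the correct labels on all feed-in wires of a gate~$G_j$, with overwhelming probability exactly one of the four ciphertexts in $\textbf{encGG}_j[r]$ decrypts, yielding the label on $\omega_{m+s+j}[r]$ consistent with the true value. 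Since by the structural assumption no circuit-output wire feeds back into a gate in the same round, we can apply this step gate by gate in topological order to conclude that the Monitor recovers both the correct flag label (matched against $w^0_{O+m+1}[r]$ and $w^1_{O+m+1}[r]$) and the correct state labels for round $r+1$, closing the induction.

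\textbf{Security.} For the System's view, I would exhibit a trivial simulator $\Sc_\Pc$ that runs the honest System code on the actual input $\sigma$ together with the simulator of the OT subprotocol in round~$1$; since no messages flow from Monitor to System outside of OT (apart from the public proceed/terminate signal), semi-honest security of OT gives indistinguishability directly. For the Monitor's view I would build $\Sc_\Mc$ that, given only the flags $\tau[1],\dots,\tau[L]$, samples random labels for every wire in every round (reusing the Monitor-state output labels across consecutive rounds as the protocol does), and for each gate replaces $\textbf{encGG}_j[r]$ by an encryption in which \emph{all four} ciphertexts encrypt the single output label determined by the simulator's internal bookkeeping, chosen so that the final flag label matches $\tau[r]$; it also invokes the OT simulator in round~$1$. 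To prove indistinguishability I would define a sequence of hybrids $H_{r,j}$ where rounds $1,\dots,r-1$ and the first $j$ gates of round~$r$ are simulated while the rest is honestly generated. Each single-gate step replaces the three ``unopened'' ciphertexts in $\textbf{encGG}_j[r]$ by encryptions of the simulated label; because the Monitor only ever obtains one of the two labels on each non-output wire (invariant maintained by the induction in the correctness proof), those replacements are undetectable under the CPA security of $\Enc$. Since there are at most $cL$ gate-hybrids and $L$ OT-hybrids, all polynomial in~$n$, a standard union bound keeps the distinguishing advantage negligible.

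\textbf{Main obstacle.} The delicate point is justifying that label reuse across rounds does not leak information, i.e.\ that the invariant ``the Monitor sees exactly one of the two labels $w^0_{O+i}[r],w^1_{O+i}[r]$ on every Monitor-state wire'' is preserved through the hybrids. Naively, the garbled gates of round $r+1$ use the same labels as keys that appear in round~$r$'s output ciphertexts, so a careless hybrid could expose both labels and break CPA security. I would handle this by ordering the hybrids so that when I switch the garbling of a gate whose output is a Monitor-state wire of round~$r$, I simultaneously switch the round-$(r{+}1)$ gates that consume that wire as a feed-in; the two labels then play symmetric roles and only the label actually ``used'' by the Monitor remains visible, permitting a clean CPA reduction. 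The assumption that the circuit output wires do not feed back within the same round is crucial here: it ensures the hybrid ordering is well-defined and the reduction is local.
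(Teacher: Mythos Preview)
Your plan mirrors the paper's proof closely: the Monitor simulator samples one label per wire (with cross-round reuse matching the protocol), builds fake garbled gates whose four ciphertexts all encrypt that single label, and invokes the OT simulator in round~1; the System simulator is just the OT simulator; indistinguishability goes through a gate-by-gate hybrid reduced to CPA security (the paper routes this through the Lindell--Pinkas double-encryption lemma). The paper in fact hand-waves the multi-round extension in a single sentence, so your explicit hybrid over rounds is a presentational improvement.

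Your ``main obstacle'', however, misidentifies the difficulty, and the simultaneous-switching fix is both unnecessary and fragile. In the plain sequential hybrid (rounds in order, gates in topological order within a round), when you switch gate $G_j$ in round~$r$ the CPA reduction embeds its challenge key as an \emph{inactive input} label of $G_j$, not as an output label. That inactive input label appears nowhere as a plaintext in the already-switched prefix---the gate producing it, whether earlier in round $r$ or the Monitor-state output gate of round $r{-}1$, has already been switched and now encrypts only its active output---and wherever it appears as a \emph{key} the reduction simply uses its encryption oracle. The output labels $w^0_{O+i}[r],w^1_{O+i}[r]$ that round~$r{+}1$ reuses as keys are chosen by the reduction itself, so constructing the still-real round-$(r{+}1)$ garbling is immediate; label reuse across rounds causes no leak. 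By contrast, your simultaneous switch would drag in round-$(r{+}1)$ gates before their other feed-in wires have been processed, breaking the very invariant you want; drop it and keep the plain sequential order.
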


\subsection{Second protocol - Monitoring with specification hiding}
\begin{figure}
\centering
    \begin{tikzpicture}[scale = 0.9, transform shape]
        \draw[thick] (0.25,0) rectangle (6,4);
        
        \node[circle, draw, thick, minimum size=1cm] (A) at (3, 2) {$G_i$};
            
        \draw[->, thick] (A.north) -- (3,3);
        \draw[->, thick] (3.8,1.2) -- (A.south east);
        \draw[->, thick] (2.2,1.2) -- (A.south west);
%        \draw[->, thick] (B.north) -- ++(0, 0.5) -| (C.east);
        \node at (3.7,2.6) {$\omega_{m+s+i}$};
        \node at (2.7,1.2) {$\iota_{2i-1}$};
        \node at (3.5,1.2) {$\iota_{2i}$};
        \node[red] at (3.7,3) {$g_{m+s+i}$};
        \node[blue] at (1.9,1.5) {$t_{2i-1}$};
        \node[blue] at (4,1.5) {$t_{2i}$};

        \foreach \x/\y in {0.75/$\omega_1$, 1.5/$\omega_{2}$, 3/$\omega_{m}$, 3.75/$\omega_{m+1}$, 5.75/$\omega_{m+s}$} {
            \draw[<-] (\x, 0) -- ++(0, -0.5) node[below] {\y};
        }
        \node at (2.25,-0.7) {$\dots$};
        \node at (4.7,-0.7) {$\dots$};
        \foreach \x/\y in {0.75/$g_1$, 1.5/$g_{2}$, 3/$g_{m}$, 3.75/$g_{m+1}$, 5.75/$g_{m+s}$} {
            \node[red] at (\x,-1.2) {\y};
        }
        \node[red] at (2.25,-1.2) {$\dots$};
        \node[red] at (4.7,-1.2) {$\dots$};

        \foreach \x/\y in {0.75/$\omega_{O+1}$, 1.5/$\omega_{O+2}$, 3.75/$\omega_{O+m}$, 5.25/$\omega_{O+m+1}$} {
            \draw[->] (\x, 4) -- ++(0, 0.5) node[above] {\y};
        }    
        \node at (2.6,4.7) {$\dots$};

        \node[red] at (2.6,5.2) {$\dots$};
        \foreach \x/\y in {0.75/$g_{O+1}$, 1.5/$g_{O+2}$, 3.75/$g_{O+m}$, 5.25/$g_{O+m+1}$} {
            \node[red] at (\x, 5.2) {\y};
        }    

        \draw[thick, dashed] (3,2) circle (2.1cm);

    \end{tikzpicture}
    \caption{Base labels: feed-out wires.} \label{fig:exponentsingoingwires}
\end{figure}
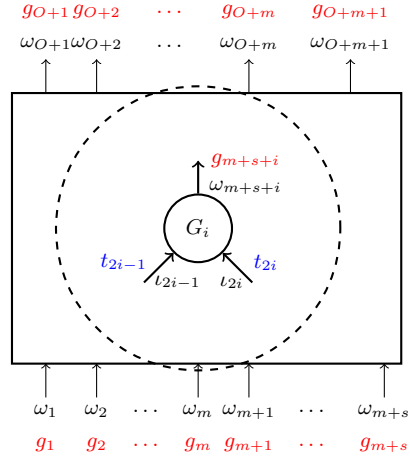
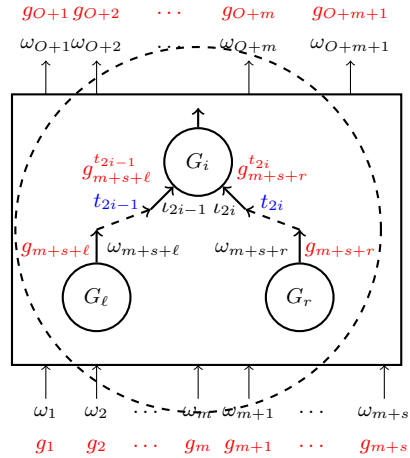
\begin{figure}
        \centering
    \begin{tikzpicture}[scale = 0.9, transform shape]
        \draw[thick] (0.25,0) rectangle (6,4);
        
       \node[circle, draw, thick, minimum size=1cm] (C) at (3, 3) {$G_i$};
        \node[circle, draw, thick, minimum size=1cm] (A) at (1.5, 1) {$G_\ell$};
        \node[circle, draw, thick, minimum size=1cm] (B) at (4.5, 1) {$G_r$};
            
        \draw[->, thick] (A.north) -- (1.5,2);% ++(0, 0.5) -| (C.south west);
        \draw[->, thick] (B.north) -- (4.5,2);%++(0, 0.5) -| (C.south east);

        \draw[->, thick] (2.3,2.3) -- (C.south west);% ++(0, 0.5) -| (C.south west);
        \draw[->, thick] (3.7,2.3) -- (C.south east);
        \draw[->, thick] (C.north) -- (3,3.8);

        \foreach \x/\y in {0.75/$\omega_1$, 1.5/$\omega_{2}$, 3/$\omega_{m}$, 3.75/$\omega_{m+1}$, 5.75/$\omega_{m+s}$} {
            \draw[<-] (\x, 0) -- ++(0, -0.5) node[below] {\y};
        }
        \node at (2.25,-0.7) {$\dots$};
        \node at (4.7,-0.7) {$\dots$};
        \foreach \x/\y in {0.75/$g_1$, 1.5/$g_{2}$, 3/$g_{m}$, 3.75/$g_{m+1}$, 5.75/$g_{m+s}$} {
            \node[red] at (\x,-1.2) {\y};
        }
        \node[red] at (2.25,-1.2) {$\dots$};
        \node[red] at (4.7,-1.2) {$\dots$};

        \foreach \x/\y in {0.75/$\omega_{O+1}$, 1.5/$\omega_{O+2}$, 3.75/$\omega_{O+m}$, 5.25/$\omega_{O+m+1}$} {
            \draw[->] (\x, 4) -- ++(0, 0.5) node[above] {\y};
        }    
        \node at (2.6,4.7) {$\dots$};

        \node[red] at (2.6,5.2) {$\dots$};
        \foreach \x/\y in {0.75/$g_{O+1}$, 1.5/$g_{O+2}$, 3.75/$g_{O+m}$, 5.25/$g_{O+m+1}$} {
            \node[red] at (\x, 5.2) {\y};
        }    

        \node at (2.2,1.7) {$\omega_{m+s+\ell}$};
        \node at (3.8,1.7) {$\omega_{m+s+r}$};

        \node[red] at (0.9,1.7) {$g_{m+s+\ell}$};
        \node[red] at (5.1,1.7) {$g_{m+s+r}$};

        \node[blue] at (1.8,2.4) {$t_{2i-1}$};
        \node[blue] at (4.1,2.4) {$t_{2i}$};

        \node at (2.8,2.3) {$\iota_{2i-1}$};
        \node at (3.4,2.3) {$\iota_{2i}$};

        \node[red] at (1.8,2.9) {$g_{m+s+\ell}^{t_{2i-1}}$};
        \node[red] at (4.1,2.9) {$g_{m+s+r}^{t_{2i}}$};

        \path[->]  (1.5,2) edge [thick,dashed]  (2.3,2.3)
                    (4.5,2) edge  [thick,dashed] (3.7,2.3);
        \draw[thick, dashed] (3,2) circle (2.7cm);

    \end{tikzpicture}
    \caption{Base labels: feed-in wires}\label{fig:labellingingoingwires}
\end{figure}%encrypt the gates as explained in the protocol. When the monitor decrypts  
The challenge with designing a protocol where the circuit must also be hidden is that the System cannot come up with the labels for the gates in the circuit, since this requires that the System know the topology of the circuit. 
As a first step, since the topology of the circuit must be kept secret, we assume that the circuit contains only NAND gates.
We provide a protocol, which is a modification of the protocol of Liu et al.~\cite{LWY22}. In our protocol, the Monitor helps the System to come up with the labellings for each gate and also obfuscates the topology of the circuit in this process. The Monitor does so by using a cyclic group $\Gb$ of prime order $q$ where the Decisional Diffie-Hellman (DDH) assumption holds. In a cyclic group, for any non-unitary element $g$, and for two values in $a,b\in \Zb_q$, it holds that $\tpl{g^{a}}^b = \tpl{g^{b}}^a$. %Further, DDH assumption implies that $g^{ab}$ is computationally indistinguishable from a random element $g^c$ picked from the group, even when $g^a$ and $g^b$ are known. 

The Monitor assigns \emph{base labels} to the feed-out wire of each gate $G$ using a randomly chosen element, say $g_G$, from the group $\Gb$. If the feed-out wire of the gate $G$ connects to some feed-in wire, then the monitor randomly chooses exponent $t\in\Zb_q$ for that feed-in wire and labels this wire using $(g_G)^t$.

More specifically, the Monitor selects the base labels using randomly generated group element $g_i$ (represented in red in \cref{fig:exponentsingoingwires}) for each feed-out wire $\omega_i$. 
Then, for each feed-in wire $\iota_j$, the Monitor also chooses an exponent $t_{j}$ (represented in blue in \cref{fig:exponentsingoingwires}).
Finally, the Monitor computes and sends the base labels for each feed-in wire, as follows. For a gate $G_i$ with feed-in wires $\iota_{2i-1}$ and $\iota_{2i}$ that are connected to feed-out wires from gates $G_\ell$ and $G_r$ (as show in \cref{fig:labellingingoingwires}), the monitor uses as labels 
$g_{m+s+\ell}^{t_{2i-1}}$ and $g_{m+s+r}^{t_{2i}}$ as the base labels. 
Note that $g_{m+s+\ell}$ and $g_{m+s+r}$ are the base-labels of the corresponding feed-out wires of gates~$G_\ell$ and~$G_r$.  

The monitor sends three base labels for each gate: two for the feed-in wires and one for the feed-out. Roughly, the DDH assures that ``random exponents of group elements'' cannot be distinguished from ``random group elements''. 
Since DDH assumption holds, we can also show that given $n$ group elements $g_1,g_2,\dots,g_n$ as well as some labels $g_{x_1}^{t_{1}},g_{x_2}^{t_{2}}, \dots, g_{x_n}^{t_{n}}$, the System cannot tell which of the $g_{x_i}^{t_{i}}$s is obtained by exponentiating which of the $g_{i}$, thus successfully obfuscating the circuit topology.

Finally, the System uses these base labels to prepare the labels of the wires that correspond to $0$ and $1$. This is done by choosing one exponent to correspond to the value $0$ and one exponent to correspond to $1$, say $\alpha_0$ and $\alpha_1$. 
For the base label $g$, the System would then label each $\tpl{g}^{\alpha_0}$ and $\tpl{g}^{\alpha_1}$ corresponding to  bits $0$ and $1$, respectively.
Feed-in wire labels are of the form $\tpl{g^t}^{\alpha_i}$, which is equal to $\tpl{g^{\alpha_i}}^t$. So, by knowing the label $g^{\alpha_i}$ (obtained by opening a garbled gate, or from the message of the System), the Monitor can compute $\tpl{g^{\alpha_i}}^t = \tpl{g^t}^{\alpha_i}$. 

Therefore, the Monitor obtains the key to open future gates by simply exponentiating the label obtained from the ungarbling process of a preceding gate and exponentiating with an appropriate exponent~$t$. The System, under the Decisional Diffie-Hellman (DDH) assumption stated below, cannot learn the topology only given such exponentiated group elements.  

We show that our protocol is correct and secure under the DDH assumption over groups of prime order.  Both message sizes and the time taken of our protocol is linear, that is, $\Oh(c+s+m)$, assuming the security parameter is a constant. 

\begin{algorithm}
\floatname{algorithm}{Hidden Specification Protocol}
\renewcommand{\thealgorithm}{}
\caption{Secure monitoring with specification hiding}\label{protocol2}
\begin{algorithmic}[1]
    \REQUIRE Similar to \protocolone, however, only the Monitor knows the specification circuit $\Cc$, but both parties agree on the number of gates $c$ of the circuit, a security parameter $n$, and a group $\Gb$ of order $q\in\Theta(2^n)$.\\
    \ENSURE  At round $r$, the Monitor learns \emph{only} the last output bit of the circuit $\Cc$ at round $r$, and neither party learns the value of $\mstring[r]$ for $r>1$, and the System does not learn $\Cc$.
    \STATE\textbf{Setup.}
  \textsc{Base labels of feed-out wires:} (See \cref{fig:exponentsingoingwires})
\begin{itemize}
    \item[a.]   The Monitor picks random group elements $g_i\in \Gb$ for each feed-out wire $\omega_i$ for $i\in\{1,\dots, O\}$ and sends the list $[g_1,g_2,\dots,g_O]$.     
    \item[b.] Further, the output wires, which correspond to the $m$ feed-out wires $\omega_{O+1},\dots,\omega_{O+m}$ are also given the (same) group elements $g_1,g_2,\dots,g_m$, respectively. The final output wire computing $\isbad$ is represented by $\omega_{O+m+1}$ and is not assigned a group element during setup phase. 
\end{itemize}
\textsc{Base labels of feed-in wires:} (See \cref{fig:exponentsingoingwires,fig:labellingingoingwires})
\begin{itemize}
    \item[c.] The Monitor further picks different exponents $t_i \xleftarrow{\$}\Zb_q$ for each feed-in wire $i\in \{1,\dots,I\}$, and finds the map $\pi: \{1,\dots,I\}\mapsto \{1\dots,O\}$ (chosen uniformly at random) such that $\pi(i) = j$ iff the feed-out wire $\omega_j$ connects to the feed-in wire $\iota_i$.
    \item[d.] The Monitor then computes $\ell_i = g_{\pi(i)}^{t_{i}}$ for every $i\in \{1,\dots,I\}$ and creates list $L = \left[\ell_1,\ell_2,\dots,\ell_I\right]$ and sends this list to the System.     This assigns group element $g_{{\pi(i)}}^{t_i}$ to the feed-in wire. %, and since the System does not know the $t_i$s, the System would not know which feed-out wire is connected to which feed-in wire.
\end{itemize}
\STATE \textbf{Labelling wires at round $r\geq 1$} 
   \begin{itemize} 
    \item[e.]  For the first round, $r=1$, the System  picks random, distinct values $\alpha^0[r],\alpha^1[r]\xleftarrow{\$}\Zb_q$. 
    For subsequent rounds, $\alpha^0[r]\gets \beta^0[r-1]$ and $\alpha^1[r]\gets \beta^{1}[r-1]$.
    \item[f.] The System assigns values $w_j^0[r]$ and $w_j^1[r]$, which corresponds to the feed-out wire $\omega_j$ having value 0 and 1, respectively, to the feed-out wires for all $j\in \{1,\dots, O\}$, $w_j^0[r] \gets g_{j}^{\alpha^0[r]} \text{ and }  w_j^1[r] \gets g_j^{\alpha^1[r]}.$
    \newline For all rounds $r$, the System  also selects random values $\beta^0[r],\beta^1[r]\xleftarrow{\$} \Zb_q$ and  it computes the label of the feed-out wires for $j\in \{O+1,\dots,O+m\}$ as $w_j^0[r] = g_{j-O}^{\beta^0[r]} \text{ and }  w_j^1[r] = g_{j-O}^{\beta^1[r]}$. 
    \newline For the feed-out wire $\omega_{O+m+1}$ representing the output of $\isbad$, it assigns 
      $w_{O+m+1}^0[r]\xleftarrow{\$} \Gb \text{ and }  w_{O+m+1}^1[r]\xleftarrow{\$}\Gb.$ \newline The System labels the feed-in wires, for each $i\in \{1,\dots,I\}$, $u_i^0[r] = {\ell_i}^{\alpha^0[r]}\text{ and } u_i^1[r] = {\ell_i}^{\alpha^1[r]}.$
    \item[g.] Proceed as in Steps 4., 5., and 6., in \protocolone, where the System garbles the gates and sends the desired keys, and Monitor ungarbles.
    \end{itemize}
\end{algorithmic}
\end{algorithm}

\begin{assumption}[Decisional Diffie–Hellman assumption \cite{Bon98,Can11}]\label{assumption:DDH}
    In a cyclic group $\Gb = \seq{g}$ of prime order $q\in \Theta(2^k)$ for $n\in \mathrm{poly}(k)$, where $g$ is a generator for $\Gb$, the following probability distributions are computationally indistinguishable:
         $(g^a,g^b,g^{ab})$, where $a$ and $b$ are uniformly and independently at random chosen from $\Zb_q$, and 
         $(g^a,g^b,g^c)$, where $a$, $b$ and $c$ are uniformly and independently at random chosen from $\Zb_q$.
%    \end{itemize}
\end{assumption}
\begin{restatable}{theorem}{Protocoltwosecure}\label{thm:Protocol2secure}
    Assuming that the chosen encryption $\textbf{Enc}$ is secure under the CPA model, the DDH assumption on group $\Gb$ holds, and the oblivious transfer protocol is secure in the presence of semi-honest adversaries, \protocoltwo~is a \emph{correct and secure monitoring protocol with specification hiding} when both parties are semi-honest, and the number of rounds is a fixed polynomial in the security parameter. 
\end{restatable}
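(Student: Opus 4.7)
The plan is to establish correctness first, then construct simulators separately for the System and the Monitor, deferring the subtlety introduced by the reactive functionality to the end.

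For correctness, I would proceed by induction on the round $r$, and within each round by topological induction on the gates. The invariant is that whenever the Monitor processes gate $G_i$ with actual (cleartext) feed-in values $v_\ell,v_r$, it already holds the feed-out labels $w_\ell^{v_\ell}[r]$ and $w_r^{v_r}[r]$ of the predecessor wires. The key algebraic identity is the DDH-style equation $(g_{\pi(i)}^{t_i})^{\alpha^b[r]}=(g_{\pi(i)}^{\alpha^b[r]})^{t_i}$, so exponentiating the feed-out label it possesses by the privately-held $t_i$ yields exactly the feed-in label $u_i^b[r]$ that the System used when producing $\textbf{encGG}_i[r]$. Unlocking the garbled gate then delivers $w_{m+s+i}^{G_i(v_\ell,v_r)}[r]$, closing the induction. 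Reactive correctness follows from the reseeding rule $\alpha^b[r{+}1]\gets\beta^b[r]$: the round-$r$ monitor-state output label $w_{O+j}^{v}[r]=g_j^{\beta^v[r]}$ is bit-for-bit the round-$(r{+}1)$ monitor-input feed-out label $w_j^{v}[r{+}1]$.

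For the System-side simulator $\Sc_\Pc$, note that the System's view consists of the lists $[g_1,\ldots,g_O]$ and $L=[\ell_1,\ldots,\ell_I]$ sent in setup, the OT transcripts, and the Monitor's optional \proceed/\terminate{} responses. The $g_i$ are uniform in $\Gb$ by construction, and each $\ell_i=g_{\pi(i)}^{t_i}$ uses a fresh secret exponent $t_i$ unknown to the System. A standard $I$-step hybrid, each step invoking \cref{assumption:DDH}, replaces the $\ell_i$ one at a time with fresh uniform group elements, so $\Sc_\Pc$ may simply output $O+I$ uniform group elements. Crucially, this simulator does not depend on $\Cc$ (neither on the permutation $\pi$ nor on the gate types), which is the quantitative content of specification hiding. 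The OT transcripts are filled in using the semi-honest simulator provided by the OT sub-protocol, composable by the universal-composition-style argument standard for semi-honest two-party computation.

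For the Monitor-side simulator $\Sc_\Mc$, which has access to $\Cc$, $\mu[1]$, all the Monitor's internal randomness, and the per-round output bit $\tau[r]$, I would mimic the real view by a hybrid over rounds and, within each round, over gates. Starting from the real view, for each gate $G_i$ replace the three ciphertexts of $\textbf{encGG}_i[r]$ corresponding to inactive input combinations by encryptions of uniformly random group elements; each replacement is justified by one invocation of CPA security of $\Enc$, using the fact that the labels used as keys for those ciphertexts are never otherwise revealed to the Monitor. Next, replace the unrevealed wire labels $w_j^{1-v_j}[r]$ by uniform group elements via a hybrid over DDH instances keyed to the unrevealed exponent $\alpha^{1-v_j}[r]$, which the Monitor never sees. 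Finally, set the flag output label to $w_{O+m+1}^{\tau[r]}[r]$ so that the simulated view decodes to the correct bit learned from the ideal functionality.

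The main obstacle, and the place where the proof genuinely extends \cite{LWY22}, is reconciling the hybrid arguments across rounds because of the reactive state. When the round-$r$ hybrid replaces the output state labels $w_{O+j}^{v}[r]$ with uniform elements, those same elements must serve as the monitor-input feed-out labels $w_j^{v}[r{+}1]$ in the round-$(r{+}1)$ hybrid; any inconsistency would break correctness of the simulation. I would resolve this by running the hybrids in temporal order and treating the propagated state labels as oracle-supplied uniform elements carried from one round to the next, observing that the exponents $\beta^0[r],\beta^1[r]$ are fresh in round $r$ and are used only as exponents in round $r{+}1$, so DDH applies uniformly. Because the total number of hybrid transitions is $\Oh(c\cdot\ell)$ where $\ell$ is the number of rounds, the accumulated distinguishing advantage remains negligible in $n$ whenever $\ell$ is a fixed polynomial in $n$, yielding the claimed security.
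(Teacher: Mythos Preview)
Your overall architecture is sound and mirrors the paper's: separate simulators for each party, with DDH hiding the circuit topology from the System and a combination of DDH and CPA hiding the System's data from the Monitor, the semi-honest OT simulator composed in at setup. Your correctness argument is more explicit than the paper's one-line treatment, and your System-side simulator is essentially the paper's (the paper runs an $O$-step hybrid over the bases $g_j$ via \cref{lemma:DDHprop} rather than your $I$-step hybrid over the $\ell_i$, but either route works, and indeed each $\ell_i=g_{\pi(i)}^{t_i}$ with a fresh $t_i$ is already statistically uniform given the $g_j$).

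There is, however, a genuine gap in your Monitor-side hybrid. You invoke CPA \emph{before} DDH, justified by ``the labels used as keys for those ciphertexts are never otherwise revealed to the Monitor.'' In \protocoltwo{} this is not enough: the inactive feed-in key $u_i^{1-v}[r]=\ell_i^{\alpha^{1-v}[r]}$ is not uniform from the Monitor's point of view, because the Monitor holds $\ell_i$, $t_i$, all the $g_j$, and---contrary to your claim that it ``never sees'' $\alpha^{1-v_j}[r]$---it also holds active labels $g_k^{\alpha^{1-v}[r]}$ for every other wire $k$ whose evaluated bit happens to equal $1-v$. Both round exponents $\alpha^0[r],\alpha^1[r]$ are thus partially exposed, and a CPA reduction cannot plant its challenger's uniform key as $u_i^{1-v}[r]$ while remaining consistent with those correlated labels. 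The fix is to swap your two steps: first apply DDH to replace all labels, across all rounds, by ones with independent per-wire exponents---the paper isolates exactly this move as \cref{prop:extensionDDHpoly}, an iterated form of \cref{lemma:DDHprop}---and only then run the standard double-encryption CPA hybrid from the Protocol-1 proof, which now goes through because the inactive keys have become genuinely uniform. The cross-round reactive coupling you rightly flag as the main obstacle is precisely what \cref{prop:extensionDDHpoly} is engineered to absorb in one shot, rather than by threading state labels through a round-by-round hybrid as you propose.
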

The proof of correctness of our theorem is a simulation based proof that constructs intermediate indistinguishable transcripts by substituting elements of the transcript with random group elements. A key result required to prove indistinguishability is a corollary of a lemma from the work of Naor and Reingold~\cite{NR04} that is an equivalent representation of the DDH assumption in our proofs, we prove the theorem. A similar lemma is also used in the work of Liu, Wang, and Yu~\cite[Lemma~3]{LWY22}. 

 \begin{restatable}[\protect{\cite[Lemma~4.4]{NR04}}]{lemma}{NaorReingold}\label{lemma:DDHprop}
     % In a cyclic group $\Gb = \seq{g}$ of prime order $q\in \Theta(2^k)$  for $n\in \mathrm{poly}(\kappa)$, given $n$ randomly chosen elements from the group $g_1,g_2,\dots,g_n\xleftarrow{\$}\Gb$ and $n+1$ randomly chosen exponents $a,a_1,a_2,\dots,a_n \xleftarrow{\$}\Zb_q$,  we have that $\tpl{g_1^{a},g_2^{a},\dots,g_n^{a}}$ and  $\tpl{g_1^{a_1},g_2^{a_2},\dots,g_n^{a_n}}$ are computationally indistinguishable.
         Assuming that the DDH assumption holds in a cyclic group $\Gb = \seq{g}$ of prime order $q\in \Theta(2^k)$  for $n\in \mathrm{poly}(\kappa)$, given $n$ randomly chosen elements from the group $g_1,g_2,\dots,g_n\xleftarrow{\$}\Gb$ and $n+1$ randomly chosen exponents $a,a_1,a_2,\dots,$ $a_n\xleftarrow{\$}\Zb_q$,  we have that $\tpl{g_1^{a},g_2^{a},\dots,g_n^{a}}$ is computationally indistinguishable from an $n$-tuple $\tpl{g_1^{a_1},g_2^{a_2},\dots,g_n^{a_n}}$.
 \end{restatable}

\section{Experimental Evaluation}\label{sec:experiments}
To test the feasibility of our protocols for monitoring, we developed an experimental C++ prototype\footnote{This prototype is accessible online at \url{https://github.com/mahykari/ppm}.}
and performed experiments to evaluate the following key questions. 
\begin{enumerate}
    \item How do the measured requirements of both protocols change under varying security parameters ($n\geq 1024$ would be industrial standard),
    \begin{enumerate}
        \item in terms of time taken per round? (see \cref{fig:timekeeper_times_P1,fig:timekeeper_times_P2})\label{item:securityTime}
        \item in terms of message sizes per round? (see \cref{fig:locks_message_P1,fig:locks_message_P2})\label{item:securityMessage}
    \end{enumerate}
    \item When the specification size ($c$) is fixed, but the size of System observables data ($s$) is large, how much do these measurements change for \protocoltwo? (see \cref{fig:timekeeperplus_time_P2,plot:timekeeperplus_size_P2})\label{item:circuitvsprogram}
%    \item How does our protocol compare to the protocols of Banno et al's~\cite{BMMBWS22} and the protocols of Wago et at's~\cite{WMSMBBS24}? (see \cref{table:compare})~\label{item:CAV22paper}
\end{enumerate}
To answer these questions, we consider the following experiment scenarios:
\begin{enumerate}
\item An \emph{access control system (ACS)} for an office building, where two types of employees, namely types $A$ and $B$, enter or exit the building through a set of external doors. The ACS tracks the numbers of entries and exits for each type of employee and for each door.
The Monitor keeps two variables $\texttt{cnt}_A$ and $\texttt{cnt}_B$,
where $\texttt{cnt}_e$ denotes count of type-$e$ employees currently in the building. At round $r$, for door $i$ of the building, the Monitor receives input from the ACS, structured as follows:
    $\texttt{entered}_A^i[r]$, $\texttt{exited}_A^i[r]$, $\texttt{entered}_B^i[r]$, $\texttt{exited}_B^i[r]$.
%    & \vdots \\
 %   \texttt{entered}_A^N, \texttt{exited}_A^N, &
%    \,\texttt{entered}_B^N, \texttt{exited}_B^N 
  There are $N$ doors, hence $N$ such quadruples. Each $\texttt{entered}_e^i$ denotes the number of type-$e$ employees who have entered through door $i$ of the building \emph{since} the last round; \texttt{exited} values have a similar definition.\\  
  \textit{Specification.} Our specification for this system requires that the number of type $A$ employees currently in the building is never less than type $B$ employees;
  concretely, the \isbad~function of the register machine activates if and only if $\texttt{cnt}_A < \texttt{cnt}_B$.
  Value of $\texttt{cnt}_e$ updates with the following rule:
  $\texttt{cnt}_e \leftarrow
    \texttt{cnt}_e \,+\, \sum_{i} \left(
    \texttt{entered}_e^i[r]\right) - \sum_{i} \left(\texttt{exited}_e^i[r] \right)$.
  Each number is an unsigned integer of fixed bit-width $W$. 
 The monitor only keeps track of employee count per type and needs $2W$ bits for Monitor state, whereas the input of the ACS to each round takes $4NW$ bits.
  
  To answer Question~\ref{item:circuitvsprogram}, we create another case where the ACS keeps track of the internal doors in the building as well (e.g., individual offices). We use $N'$ to denote the number of internal doors. In this case, each ACS update takes $4(N+N')W$ bits, but the specification is still expressed over only $4NW$ bits. 
  \item \emph{The locks of a parallel program}, where every lock has at most one `user' at any given time. Each lock provides a \texttt{lock()} and \texttt{unlock()} interface, and all the locks are initially `unlocked'.
  The Monitor keeps track of all lock states $\texttt{lock}_1, \dots, \texttt{lock}_N$, where $N$ denotes the total number of locks in the system. Each $\texttt{lock}_i$ can have value \texttt{LOCK} or \texttt{UNLOCK}.
  The Monitor, at round $t$, receives input from the lock system, structured as follows: $\texttt{request}_1[t], \dots, \texttt{request}_N[t]$, 
  where $N$ denotes the number of locks, and each \texttt{request} can have value \texttt{LOCK}, \texttt{UNLOCK}, or \texttt{SKIP}.\\
  \textit{Specification.} The specification for this scenario requires that \texttt{lock()} or \texttt{unlock()} is never called twice in a row for any of the locks in the system. Concretely, at round $r$, we have:
  $ \isbad \iff \bigvee_i \left(
    \texttt{request}_i[r] = \texttt{lock}_i \right) $.
  To update $\texttt{lock}_i$, we simply replace its value with $\texttt{request}_i[r]$. 
  As the Monitor keeps track of all locks, it needs $N$ bits for its state.
  Each \texttt{request} takes 2 bits to represent, and hence each update to the Monitor needs $2 N$ bits.
\end{enumerate}
\input{4experimentfiguresarXiv}
\paragraph{Experiment results summary.} 
To answer Question~\ref{item:securityTime}, we plot the breakdown of execution times for \protocolone~in \cref{fig:timekeeper_times_P1} (scenario ACS), and  \cref{fig:timekeeper_times_P2,fig:locks_times_P2} (scenarios ACS and locks) for \protocoltwo. For these cases, ACS only updates on external doors (i.e., $N'=0$),
and we select values
% $(10,16)$, $(10,32)$, $(30,16)$, and $(30,32)$ 
% for the parameter values $(N,W)$. 
% %and we 
% select values
for $N$ among $\{10,30\}$ and $W$ among $\{16,32\}$ to give us $4$ different instances.  
 For the Locks scenario, we consider the values $\{100, 300,500,1000\}$ for the parameter $N$. \protocolone~relies only on random string generation instead of group operations and has symmetric garbling and ungarbling phases and the round times are lower and more uniformly distributed between System and Monitor. 
In the \protocoltwo, time is mostly spent on the System side, since System performs more group operations than Monitor in the garbling process; as the security parameter increases, this difference becomes more visible.
The superlinear growth of round times also conforms with the complexity of group operations. 
\begin{table}
    \centering
    \renewcommand{\arraystretch}{1.1} % Slightly increase row height
    \setlength{\tabcolsep}{6pt} % Adjust column spacing
    \begin{tabular}{|l|r|r|}
        \hline
        \multirow{2}{*}{\textbf{Benchmark}} & \multicolumn{2}{c|}{\textbf{Banno et al.}} \\
        \cline{2-3}
        & \textbf{DFA Size} & \textbf{Time} \\
        \hline
        MOD ($m=500$)  & 500  & $\sim$0.002 s \\
        \hline
        BGM ($\psi_2$), REVERSE   & 2885376 & $\sim$24 s \\
        BGM ($\psi_2$), BLOCK     & 11126 & 0.182 s \\
        \hline
        BGM ($\psi_4$), REVERSE  & N/A & time-out \\
        BGM ($\psi_4$), BLOCK     & 7026 & 0.049 s \\
        \hline
        ACS (\cref{fig:timekeeper_times_P2a})  & $\geq 2^{32}$ & $10$ hours (estimated)\\
        \hline 
        LOCKS (\cref{fig:locks_times_P2a}) & $\geq 2^{300}$  & $10^6$ years (estimated) \\
        \hline
    \end{tabular}
    \caption{Monitoring latency of a single event in the protocol of Banno et al.~\cite{BMMBWS22}}
    \label{table:compareBanno}
\end{table}

\begin{table}
    \centering
    \renewcommand{\arraystretch}{1.1} % Slightly increase row height
    \setlength{\tabcolsep}{6pt} % Adjust column spacing
    \begin{tabular}{|l|r|r|}
        \hline
        \multirow{2}{*}{\textbf{Benchmark}} & \multicolumn{2}{c|}{\textbf{\protocoltwo}} \\
        \cline{2-3}
        & \textbf{Circuit Size} & \textbf{Time for $n=1024$} \\
        \hline
        MOD ($m=500$)  &  146 & 0.30 s  \\
        \hline
        BGM ($\psi_2$)  & {118} & {0.24 s} \\
        \hline
        BGM ($\psi_4$) & {89} & {0.23 s} \\
        \hline
        ACS (\cref{fig:timekeeper_times_P2a})  & 7127 & 18.94 s \\
        \hline 
        LOCKS (\cref{fig:locks_times_P2a}) & 5700 & 19.96 s \\
        \hline
    \end{tabular}
    \caption{Monitoring latency of a single event in Hidden Specification Protocol}
    \label{table:compareHidden}
\end{table}
For Question~\ref{item:securityMessage}, we plot the breakdown of execution times for \protocolone~in \cref{fig:locks_message_P1} and for \protocoltwo~in \cref{fig:locks_message_P2}. %For the ACS scenario we consider values
%$(10,16)$, $(10,32)$, $(10,16)$, and $(15,32)$ 
%for the parameter values $(N,W)$ for \cref{fig:timekeeper_times_P2a} and similarly, 
For the Locks scenario, we consider the values $\{100, 300,500,1000\}$ again for the parameter $N$ that represents the number of locks.  
As expected, we only observe a linear correlation between message sizes and gate counts in both protocols. %The reduction in message sizes for the same parameter between the two protocols stems from differences in how the security parameter is handled, rather than inherent efficiency. %This is expected as messages contain only representations of group elements (or strings) used in each round.

Finally, for Question~\ref{item:circuitvsprogram}, we plot \cref{fig:timekeeperplus_time_P2,plot:timekeeperplus_size_P2}, which shows how round time increases with increasing sizes of System's observable data, while keeping ``relevant'' System input size fixed and again consider parameters of $N$ and $W$ from $\{10,30\}$ and $\{16,32\}$, respectively. However, to inflate System input size, we use increasing values for the parameter $N'$ (no. of internal doors). Observe that the circuit size remains the same even while the specification considers increasing number of external doors. 
With the parameters we have used in our experiments, as System input increases, the time taken per round increases only marginally, as a result of the garbling phase's dominance in execution time (\cref{plot:timekeeperplus_time_P2}). The increase in message size is more observable, as one key is sent per each bit of the input (\cref{plot:timekeeperplus_size_P2}). However, even this growth is less steep than the growth of the message size when both circuit size and System observables are scaled proportionally as seen in \cref{fig:locks_message_P2}.

We remark that all the monitoring latencies reported are for a single event rather than a trace. 

%\thejaswini{Need to edit this whole section from here. especially the following paragraphs}
%we turn to Question~\ref{item:CAV22paper} and answer it in \cref{table:compare}. 
As discussed in the related work section, the setting considered by Banno et al.~\cite{BMMBWS22} (and therefore also Waga et al~\cite{WMSMBBS24}) is orthogonal to ours---each protocol is tailored to its specific context and not directly applicable to the other. However, some of their specifications they describe might be relevant to our setting, and vice versa. Therefore, we evaluated our protocol on several of their specifications and, conversely, estimated the performance of their protocol on the ACS and locks scenarios from our work.

Banno et al. consider two scenarios: DFA that counts number of 1s in its input modulus $m$ (MOD) and Blood Glucose Monitoring (BGM). 
They also have two protocols REVERSE and BLOCK. 
We implemented the specifications for each scenario with the \emph{highest} reported monitoring latencies in their work for either REVERSE or BLOCK, with the same security parameter as Banno et al. ($n=1024$). For these values, our protocols take time that is in the order of magnitude of 100 milliseconds. We remark that their experiments were run on Intel Xeon Silver (32 cores and 64 threads), a superior hardware to ours. These times are summarised in \cref{table:compareBanno}.

To test our specifications against their protocol, 
%out of curiosity, 
we also estimated the time taken by their protocol on the ACS and Locks scenarios we designed. However, our scenarios cannot be directly specified as LTL expressions over the observable output alone to be used directly as an input into their protocol, since our description language is more expressive than LTL specification.  
Since their protocols converts LTL specifications into DFAs,  we considered the size of the smallest possible DFAs accepting the ACS and Locks scenarios to estimate the running time. We then extrapolated the monitoring latency on ACS and the locks scenario from the fact that  their protocol is linear in the size of the DFA, and scale from the other DFA instances provided in their work. 
%Even with such calculations, and even for the smallest parameters for our ACS instances, and locks instances, we see that their protocols take an infeasible amount of time. 
These times are summarised in \cref{table:compareHidden}. 

For both \cref{table:compareBanno,table:compareHidden}, we use the same notation as their paper to refer to the LTL specifications ($\psi_2$ and $\psi_4$) in their work, obtained originally from related work on runtime verification for artificial pancreas~\cite{CFMS15}. 

The sizes of the formula considered in the experiments conducted by Waga et al.~\cite{WMSMBBS24} are similar in terms of DFA sizes to those considered by Banno et al., and we therefore only restrict our comparison to Banno et al's work. It is reasonable to expect that the monitoring latency of both Waga et al's protocols on these  specifications  would also be in the same order of magnitude. % of a few hundred milliseconds. %Since they also convert their STL formula into a DFA accepting a bit encoding of the trace of an execution. 

% \cref{fig:timekeeper_times_P2,fig:locks_times_P2} show a breakdown of execution times for Protocol 2. For every round (except initialisations), time is mostly spent on the System side, since System performs more group operations than Monitor (6 vs. 2 exponentiations); as the security parameter increases, this difference becomes more visible.
% The superlinear growth of round times also conforms with the complexity of group operations.
% \cref{fig:timekeeper_times_P1} shows execution times for the same parameters as \cref{fig:timekeeper_times_P2}, using Protocol 1, which relies only on random string generation instead of group operations and has a symmetric garbling and un-garbling phases; hence, round times are lower and more uniformly distributed between System and Monitor.

% There is a linear correlation between message sizes and gate counts in both Protocol 1 and 2. This is expected, as messages only contain representations of group elements (or strings) used in each round.

% At last, \cref{fig:timekeeperplus_times_P2} shows how round time increases with increasing System input size, while keeping `relevant' System input size fixed. With the parameters we have used in our experiments, extra System input increases round time marginally. This, again, is a result of the garbling phase's dominance in execution time, as well as the Monitor's being oblivious to the extra input. 
\paragraph{Execution pipeline.} 
We write each specification as a synthesisable Verilog module, which describes the \mupdate~and \isbad~functions of our intended register machine. We then synthesise the equivalent circuit, which is what we will use throughout the protocol execution.

We spawn Monitor and System as separate processes that interact via message passing. Internally, we have modelled and implemented each party as a communicating transition system, with access to asynchronous communication channels.
Each state of the transition system can perform read or write operations on a dedicated memory fragment. This design allows us to invoke a protocol within a protocol, as every state can internally keep track of the execution of another transition system, and proceed once the internal protocol is done. An example of this use case is our OT step in the initialisation phase of the \protocoltwo.

We use the GNU Multiple Precision Arithmetic (GMP) library for big-integer arithmetic, OpenSSL for hash functions, and ZeroMQ for asynchronous message passing. We also use the Yosys open synthesis suite for synthesising Boolean circuits from specifications. 
We pass a Verilog specification module to Yosys, with ABC \cite{BM10} as synthesis back-end, to obtain a Boolean circuit equivalent, represented in the Berkeley logic interchange format (BLIF).
Note that all optimisations on circuit size are also done by Yosys and ABC. 
\paragraph{Cryptographic primitives.} The DDH assumption holds in the \emph{quadratic residue} group $QR_q$ if $q$ is a \emph{safe prime} \cite{Bon98}. We use groups $QR_q$ in our protocols,
with values of $q$ defined in RFC 2409, 3526 \cite{rfc2409,rfc3526}. The security parameter specifies the binary representation size of $q$.
Similar to Huang et al.~\cite{HEKM11}, we construct the symmetric encryption scheme for \textbf{EncYao} (see \cref{eq:yao}) using a hash function; precisely, for keys $L$ and $R$ and secret $S$, we have:
$ \Enc_{L,R} ( S ) = 
  \text{SHAKE-256}\left(L \parallel R\right) \oplus 
  \left(S \parallel 1^{100}\right) $,
where $\cdot \parallel \cdot$ denotes string concatenation, and SHAKE-256 is an \emph{extendable-output} hash function from the SHA-3 family; since group elements (hence, their representation) can be arbitrarily large, we needed a hash function with arbitrarily large output.
Note that, contrary to the parametrised security level for the groups we use, SHAKE-256 has a fixed security level of 256 bits; however, this does not impose any practical vulnerabilities and, therefore, is practically secure.
The 100-bit constant padding at the end of the secret is necessary for the decryption phase (gate un-garbling), to detect the correctly decrypted value.
We use the simple OT protocol introduced by Bellare and Micali~\cite{BM89} in both protocols. %, for each Monitor input bit individually. 

All experiments were run on a personal computer with an Intel Core i5-1235U processor, 16 GB of memory, running Linux Mint 21.3.
Both System and Monitor processes were spawned in parallel, and bound to \texttt{localhost} for communication. We use a timeout of one hour per protocol round, excluding initialisations that take place only in the first round. As a source of true randomness, we periodically read from the file \texttt{/dev/urandom}; we buffer a fixed number of such values in program memory, in order to perform fewer file I/O operations.

\paragraph{Remark.} 
Unlike \protocoltwo, which requires circuits with a single type of binary gate, \protocolone~supports multiple gate types, enabling the use of optimised garbling techniques \cite{ZRE15,App16}. 
Further, even \protocoltwo can be made more efficient if the number of gates of each type is known to all, while still ensuring the circuit topology is not known. 
This opens potential avenues for optimisation. Additionally, our experiments showed a 50\% average reduction in circuit size when ABC utilised all basic gates instead of only NAND gates, which could contribute to further speed-up. 

\section{Outlook}
We took a first step toward privacy-preserving monitoring by proposing protocols that are correct and secure. Our experiments demonstrate an increase in both the message length and the protocol’s overhead with respect to the security parameter, demonstrating a trade-off between privacy and efficiency.

While the levels of privacy provided by our protocols are sufficient for many real-world applications, they fall short of the requirements in highly privacy-sensitive settings.
First of all, we only focus on cryptographic privacy in our work, and further, we only assume semi-honest parties in our current work. Extending these protocols to protect against actively malicious parties---those who intentionally deviate from the protocol---may be computationally expensive. A potential compromise is to consider \emph{covert systems}, as proposed by Aumann and Lindell~\cite{AL10}, where adversaries that deviate from the protocol are caught with a positive probability. For monitoring applications, repeated interactions increase the likelihood of catching cheating agents across multiple rounds, ultimately approaching probabilities close to~1.

Even for the setting of semi-honest parties, we believe that our protocol could be enhanced by optimising the number of gates that represent the specification (see \cref{fig:timekeeperplus_time_P2}). 
Heuristics can be employed to reduce circuit size, but an alternative approach is to relax the specifications---either in terms of soundness or completeness---depending on the specific requirements, to enable encoding with smaller circuits.
Another direction to improve the performance of our protocols is to parallelise the protocols. The process of garbling gates is inherently parallelisable for both parties, particularly for the system. Similarly, the monitor's task of ungarbling can also be parallelised, with the primary bottleneck being the depth of the circuit. Consequently, finding circuits with lower depth, even if it means increasing the number of gates, could enable faster parallelised algorithms.

Our protocol works for specs described by register automata, or using Yosys. It would be future work to integrate it with state-of-the-art monitoring tools such as BeepBeep~\cite{MKH18}, DejaVu~\cite{HPU18}, or \textsc{MonPoly}~\cite{BKZ17}. 
Lastly, our current protocols assume that the monitor and the monitored system are single entities, and that the monitor relies on a linear order of observations. 
Developing privacy-preserving monitoring protocols for scenarios where the system and monitor are distributed is an interesting research challenge.%, as this is often not the case with real-world data.

 \begin{credits}
 \subsubsection{\ackname}This work is a part of project VAMOS that has received funding from the European Research Council (ERC), grant agreement No 101020093.

We thank anonymous reviewers for pointing us to related work~\cite{BMMBWS22} and for their valuable suggestions that improved this paper.
 \end{credits}
%
% ---- Bibliography ----
%
% BibTeX users should specify bibliography style 'splncs04'.
% References will then be sorted and formatted in the correct style.
%
%\bibliographystyle{splncs04}
% TO UNCOMMENT ABOVE FOR SUMBISSION ELSEWHERE
\bibliographystyle{alpha}
\bibliography{biblio}

\newcommand{\etalchar}[1]{$^{#1}$}
\begin{thebibliography}{BMM{\etalchar{+}}22}

\bibitem[AL10]{AL10}
Yonatan Aumann and Yehuda Lindell.
\newblock Security against covert adversaries: Efficient protocols for realistic adversaries.
\newblock {\em J. Cryptol.}, 23(2):281--343, 2010.

\bibitem[App16]{App16}
Benny Applebaum.
\newblock Garbling xor gates ``for free'' in the standard model.
\newblock {\em Journal of Cryptology}, 29(3):552--576, Jul 2016.

\bibitem[BBKL19]{BBKL19}
Muhammed~Ali Bing{\"o}l, Osman Bi{\c{c}}er, Mehmet~Sabir Kiraz, and Albert Levi.
\newblock An efficient 2-party private function evaluation protocol based on half gates.
\newblock {\em The Computer Journal}, 62(4):598--613, 2019.

\bibitem[BFFR18]{BFFR18}
Ezio Bartocci, Yli{\`e}s Falcone, Adrian Francalanza, and Giles Reger.
\newblock {\em Introduction to Runtime Verification}, pages 1--33.
\newblock Springer International Publishing, Cham, 2018.

\bibitem[Bir11]{Bir11}
Alex Biryukov.
\newblock {\em Chosen Plaintext Attack}, pages 205--206.
\newblock Springer US, Boston, MA, 2011.

\bibitem[BKH18]{MKH18}
Mohamed~Recem Boussaha, Rapha{\"e}l Khoury, and Sylvain Hall{\'e}.
\newblock Monitoring of security properties using beepbeep.
\newblock In Abdessamad Imine, Jos{\'e}~M. Fernandez, Jean-Yves Marion, Luigi Logrippo, and Joaquin Garcia-Alfaro, editors, {\em Foundations and Practice of Security}, pages 160--169, Cham, 2018. Springer International Publishing.

\bibitem[BKZ17]{BKZ17}
David~A. Basin, Felix Klaedtke, and Eugen Zalinescu.
\newblock The monpoly monitoring tool.
\newblock In {\em RV-CuBES}, 2017.

\bibitem[BM82]{BM82}
Manuel Blum and Silvio Micali.
\newblock How to generate cryptographically strong sequences of pseudo random bits.
\newblock In {\em 23rd Annual Symposium on Foundations of Computer Science {FOCS}}, pages 112--117. {IEEE} Computer Society, 1982.

\bibitem[BM89]{BM89}
Mihir Bellare and Silvio Micali.
\newblock Non-interactive oblivious transfer and applications.
\newblock In Gilles Brassard, editor, {\em Advances in Cryptology - {CRYPTO} '89, 9th Annual International Cryptology Conference}, volume 435 of {\em Lecture Notes in Computer Science}, pages 547--557. Springer, 1989.

\bibitem[BM10]{BM10}
Robert Brayton and Alan Mishchenko.
\newblock Abc: An academic industrial-strength verification tool.
\newblock In Tayssir Touili, Byron Cook, and Paul Jackson, editors, {\em Computer Aided Verification}, pages 24--40, Berlin, Heidelberg, 2010. Springer Berlin Heidelberg.

\bibitem[BMM{\etalchar{+}}22]{BMMBWS22}
Ryotaro Banno, Kotaro Matsuoka, Naoki Matsumoto, Song Bian, Masaki Waga, and Kohei Suenaga.
\newblock Oblivious online monitoring for safety {LTL} specification via {F}ully {H}omomorphic {E}ncryption.
\newblock In {\em Computer Aided Verification - CAV}, pages 447--468. Springer International Publishing, 2022.

\bibitem[Bon98]{Bon98}
Dan Boneh.
\newblock The decision diffie-hellman problem.
\newblock In Joe~P. Buhler, editor, {\em Algorithmic Number Theory}, pages 48--63, Berlin, Heidelberg, 1998. Springer Berlin Heidelberg.

\bibitem[CFMS15]{CFMS15}
Fraser Cameron, Georgios Fainekos, David~M. Maahs, and Sriram Sankaranarayanan.
\newblock Towards a verified artificial pancreas: Challenges and solutions for runtime verification.
\newblock In {\em Runtime Verification}, pages 3--17. Springer International Publishing, 2015.

\bibitem[CH98]{rfc2409}
David Carrel and Dan Harkins.
\newblock {The Internet Key Exchange (IKE)}.
\newblock RFC 2409, November 1998.

\bibitem[Cra17]{Cra17}
S~Michael Crawford.
\newblock Goodhart{\textquoteright}s law: when waiting times became a target, they stopped being a good measure.
\newblock {\em BMJ}, 359, 2017.

\bibitem[CSW20]{CSW20}
Ran Canetti, Pratik Sarkar, and Xiao Wang.
\newblock Blazing fast ot for three-round uc ot extension.
\newblock In {\em Public-Key Cryptography -- {PKC}}, pages 299--327. Springer International Publishing, 2020.

\bibitem[CV11]{Can11}
Ran Canetti and Mayank Varia.
\newblock {\em Decisional Diffie--Hellman Problem}, pages 316--319.
\newblock Springer US, Boston, MA, 2011.

\bibitem[GDPT13]{GDPT13}
Radu Grigore, Dino Distefano, Rasmus~Lerchedahl Petersen, and Nikos Tzevelekos.
\newblock Runtime verification based on register automata.
\newblock In Nir Piterman and Scott~A. Smolka, editors, {\em Tools and Algorithms for the Construction and Analysis of Systems - 19th International Conference, {TACAS} 2013, Held as Part of {ETAPS}}, volume 7795 of {\em Lecture Notes in Computer Science}, pages 260--276. Springer, 2013.

\bibitem[GMW87]{GMW87}
O.~Goldreich, S.~Micali, and A.~Wigderson.
\newblock How to play any mental game.
\newblock In {\em Proceedings of the Nineteenth Annual ACM Symposium on Theory of Computing}, STOC '87, page 218–229, New York, NY, USA, 1987. Association for Computing Machinery.

\bibitem[Gol04]{Gol04}
Oded Goldreich.
\newblock {\em The Foundations of Cryptography - Volume 2: Basic Applications}.
\newblock Cambridge University Press, 2004.

\bibitem[HEKM11]{HEKM11}
Yan Huang, David Evans, Jonathan Katz, and Lior Malka.
\newblock Faster secure two-party computation using garbled circuits.
\newblock In {\em Proceedings of the 20th USENIX Conference on Security}, SEC'11, page~35, USA, 2011. USENIX Association.

\bibitem[HPU18]{HPU18}
Klaus Havelund, Doron Peled, and Dogan Ulus.
\newblock Dejavu: A monitoring tool for first-order temporal logic.
\newblock In {\em 2018 IEEE Workshop on Monitoring and Testing of Cyber-Physical Systems (MT-CPS)}, pages 12--13, 2018.

\bibitem[IF13]{mid2013report}
Mid Staffordshire {NHS} Foundation Trust~Public Inquiry and R.~Francis.
\newblock {\em Report of the Mid Staffordshire {NHS} Foundation Trust Public Inquiry: Executive Summary}.
\newblock HC (Series) (Great Britain. Parliament. House of Commons). Stationery Office, 2013.

\bibitem[JLAP20]{JLAP20}
Samuel Judson, Ning Luo, Timos Antonopoulos, and Ruzica Piskac.
\newblock Privacy preserving {CTL} model checking through oblivious graph algorithms.
\newblock In Jay Ligatti, Xinming Ou, Wouter Lueks, and Paul Syverson, editors, {\em WPES'20: Proceedings of the 19th Workshop on Privacy in the Electronic Society, Virtual Event, USA, November 9, 2020}, pages 101--115. {ACM}, 2020.

\bibitem[JLK{\etalchar{+}}16]{JLKWHSS16}
Yu~Jiang, Han Liu, Hui Kong, Rui Wang, Mohammad Hosseini, Jiaguang Sun, and Lui Sha.
\newblock Use runtime verification to improve the quality of medical care practice.
\newblock In {\em Proceedings of the 38th International Conference on Software Engineering Companion}, ICSE '16, page 112–121, New York, NY, USA, 2016. Association for Computing Machinery.

\bibitem[KAAP25]{KAAP25}
John Kolesar, Shan Ali, Timos Antonopoulos, and Ruzica Piskac.
\newblock Coinductive proofs of regular expression equivalence in zero knowledge, 2025.

\bibitem[Kil88]{Kil88}
Joe Kilian.
\newblock Founding crytpography on oblivious transfer.
\newblock In {\em Proceedings of the Twentieth Annual ACM Symposium on Theory of Computing}, STOC '88, page 20–31, New York, NY, USA, 1988. Association for Computing Machinery.

\bibitem[KK03]{rfc3526}
Mika Kojo and Tero Kivinen.
\newblock {More Modular Exponential (MODP) Diffie-Hellman groups for Internet Key Exchange (IKE)}.
\newblock RFC 3526, May 2003.

\bibitem[KKL{\etalchar{+}}02]{KKLSV02}
Moonjoo Kim, Sampath Kannan, Insup Lee, Oleg Sokolsky, and Mahesh Viswanathan.
\newblock Computational analysis of run-time monitoring: Fundamentals of java-mac1.
\newblock {\em Electronic Notes in Theoretical Computer Science}, 70(4):80--94, 2002.
\newblock RV'02, Runtime Verification 2002 (FLoC Satellite Event).

\bibitem[KM11]{KM11}
Jonathan Katz and Lior Malka.
\newblock Constant-round private function evaluation with linear complexity.
\newblock In {\em Advances in Cryptology - {ASIACRYPT} 2011 - 17th International Conference on the Theory and Application of Cryptology and Information Security}, volume 7073 of {\em Lecture Notes in Computer Science}, pages 556--571. Springer, 2011.

\bibitem[LAH{\etalchar{+}}22]{LAHPTW22}
Ning Luo, Timos Antonopoulos, William~R. Harris, Ruzica Piskac, Eran Tromer, and Xiao Wang.
\newblock Proving {UNSAT} in zero knowledge.
\newblock In {\em Proceedings of the 2022 {ACM} {SIGSAC} Conference on Computer and Communications Security, {CCS}}, pages 2203--2217. {ACM}, 2022.

\bibitem[LJA{\etalchar{+}}22]{LJAPW22}
Ning Luo, Samuel Judson, Timos Antonopoulos, Ruzica Piskac, and Xiao Wang.
\newblock ppsat: Towards two-party private {SAT} solving.
\newblock In Kevin R.~B. Butler and Kurt Thomas, editors, {\em 31st {USENIX} Security Symposium, {USENIX} Security 2022, Boston, MA, USA, August 10-12, 2022}, pages 2983--3000. {USENIX} Association, 2022.

\bibitem[LP09]{LP09}
Yehuda Lindell and Benny Pinkas.
\newblock A proof of security of {Y}ao's protocol for two-party computation.
\newblock {\em J. Cryptol.}, 22(2):161--188, 2009.

\bibitem[LS09]{LS09}
Martin Leucker and Christian Schallhart.
\newblock A brief account of runtime verification.
\newblock {\em The Journal of Logic and Algebraic Programming}, 78(5):293--303, 2009.
\newblock The 1st Workshop on Formal Languages and Analysis of Contract-Oriented Software (FLACOS’07).

\bibitem[LWS{\etalchar{+}}24]{LWSTRP24}
Ning Luo, Chenkai Weng, Jaspal Singh, Gefei Tan, Mariana Raykova, and Ruzica Piskac.
\newblock Privacy-preserving regular expression matching using {TNFA}.
\newblock In {\em Computer Security - {ESORICS} 2024 - 29th European Symposium on Research in Computer Security}, volume 14983 of {\em Lecture Notes in Computer Science}, pages 225--246. Springer, 2024.

\bibitem[LWY22]{LWY22}
Yi~Liu, Qi~Wang, and Siu-Ming Yiu.
\newblock Making private function evaluation safer, faster, and simpler.
\newblock In Goichiro Hanaoka, Junji Shikata, and Yohei Watanabe, editors, {\em Public-Key Cryptography -- PKC 2022}, pages 349--378, Cham, 2022. Springer International Publishing.

\bibitem[Mea14]{Mea14}
Alex Mears.
\newblock Gaming and targets in the {E}nglish {NHS}.
\newblock {\em Universal Journal of Management}, 2:293--301, 09 2014.

\bibitem[MN04]{MN04}
Oded Maler and Dejan Nickovic.
\newblock Monitoring temporal properties of continuous signals.
\newblock In {\em Formal Techniques, Modelling and Analysis of Timed and Fault-Tolerant Systems}, pages 152--166. Springer, 2004.

\bibitem[MS13]{MS13}
Payman Mohassel and Seyed~Saeed Sadeghian.
\newblock How to hide circuits in {MPC} an efficient framework for private function evaluation.
\newblock In {\em Advances in Cryptology - {EUROCRYPT} 2013, 32nd Annual International Conference on the Theory and Applications of Cryptographic Techniques, 2013}, volume 7881 of {\em Lecture Notes in Computer Science}, pages 557--574. Springer, 2013.

\bibitem[NR04]{NR04}
Moni Naor and Omer Reingold.
\newblock Number-theoretic constructions of efficient pseudo-random functions.
\newblock {\em J. ACM}, 51(2):231–262, mar 2004.

\bibitem[Pnu77]{Pnu77}
Amir Pnueli.
\newblock The temporal logic of programs.
\newblock In {\em 18th Annual Symposium on Foundations of Computer Science, Providence, Rhode Island, USA, 31 October - 1 November 1977}, pages 46--57. {IEEE} Computer Society, 1977.

\bibitem[Rab81]{Rab81}
M.~Rabin.
\newblock How to exchange secrets by oblivious transfer.
\newblock Technical report, Technical Report Tech. Memo TR-81, Aiken Computation Laboratory, 1981.

\bibitem[WMS{\etalchar{+}}24]{WMSMBBS24}
Masaki Waga, Kotaro Matsuoka, Takashi Suwa, Naoki Matsumoto, Ryotaro Banno, Song Bian, and Kohei Suenaga.
\newblock Oblivious monitoring for discrete-time {STL} via fully homomorphic encryption.
\newblock In {\em Runtime Verification - 24th International Conference, {RV}}, volume 15191 of {\em Lecture Notes in Computer Science}, pages 59--69. Springer, 2024.

\bibitem[Yao82]{Yao82}
Andrew~Chi{-}Chih Yao.
\newblock Protocols for secure computations (extended abstract).
\newblock In {\em 23rd Annual Symposium on Foundations of Computer Science}, pages 160--164. {IEEE} Computer Society, 1982.

\bibitem[Yao86]{Yao86}
Andrew Chi-Chih Yao.
\newblock How to generate and exchange secrets.
\newblock In {\em 27th Annual Symposium on Foundations of Computer Science ({SFCS} 1986)}, pages 162--167, 1986.

\bibitem[ZRE15]{ZRE15}
Samee Zahur, Mike Rosulek, and David Evans.
\newblock Two halves make a whole.
\newblock In {\em Advances in Cryptology - EUROCRYPT 2015}, pages 220--250, Berlin, Heidelberg, 2015. Springer Berlin Heidelberg.

\end{thebibliography}
\newpage
\appendix
% \section{Ideal view of a monitoring protocol}\label{app:reactive}
% \input{7aaAppendixReactive}
\section{Proof of correctness and security of \protocolone}\label{app:protocol1}
\Protocolonesecure*

The proof that follows is similar to the detailed and comprehensive proof of Yao's protocol~\cite{LP09} and we modify it for our case. Further, we can use sequential composition theorems for semi-honest models~\cite[Theorem~7.3.3]{Gol04} to simplify our proofs, however to make this self-contained, we instead use the fact that there are secure oblivious transfer protocols, which yield a simulator that can produce an indistinguishable transcript for both parties. 
We formally state the CPA assumption here.
\begin{assumption}[Security under chosen plaintext attack (CPA)~\cite{Bir11}]\label{assumption:CPA} 
A \\ private-key encryption scheme $\Enc = (G, E, D)$ is secure under CPA if it has indistinguishable encryptions in the presence of nonuniform adversaries in which the adversary has the ability to choose plaintexts and view their corresponding encryptions.
\end{assumption}
%The intuition behind sequential compositional theorems is to enable us to use as sub-routines protocols that are already proven to be secure. For example, since we use Oblivious transfer as a sub-routine in our protocol in a sequential manner, and it is known that oblivious transfer protocols are secure under some assumptions, then we only need to show that our protocol is secure when we use an oracle to a secure-by-definition oblivious transfer simulation under the same assumptions.

\subsection*{Simulating the view of the Monitor}

% We assume that oblivious transfer is secure, and only prove the security of our protocol in the Oblivious transfer model.  This is possible since our protocol only makes sequential calls to the oblivious transfer protocol.
 We always assume that we use a sequential secure implementation of the oblivious transfer protocol, 
which means that there are simulators $S^{\texttt{OT}}_{\texttt{Monitor}}$ and $S^{\texttt{OT}}_{\texttt{System}}$, that can simulate the view of the oblivious transfer protocol that are indistinguishable from a real simulation of the System or the Monitor respectively. 

\paragraph{Some intuition.}
Our simulator picks only one random string per wire instead of the two random strings corresponding to $0$ and $1$ as in the protocol. During the garbling process, this one random string is encrypted using the labels of the feed-in wires, along with two other fake labels generated for the feed-in wire.
For example, if the labels were $L^0,L^1$, for the left feed-in wires $R^0,R^1$, for the right feed-in wires and $S^0,S^1$ for the feed-out wires, and the gate being garbled was an \emph{AND} gate, then the garbled gates would be a random ordering of the four elements $$\{\Enc_{L^0,R^0}(S^0),\Enc_{L^1,R^0}(S^0),\Enc_{L^0,R^1}(S^0),\Enc_{L^1,R^1}(S^1)\}.$$ 
However, the simulator first creates only labels $L$, $R$, $S$ for a gate. Later, for each gate it also creates some random labels $L'$ and $R'$  to produce the set of four cipher texts $$\{\Enc_{L,R}(S),\Enc_{L',R}(S),\Enc_{L,R'}(S),\Enc_{L',R'}(S)\}$$ (irrespective of the gate is \emph{AND}, \emph{OR}, or \emph{NAND}) used. This is the crux of the simulator. We describe formally a simulator of the view of the Monitor and prove indistinguishability below.

\subsubsection{Definition of the simulator $\Sc_M$.}
The simulator, at round $r\geq 1$,  computes the following messages as the messages sent by the System during the protocol.
    \begin{enumerate}
    \item For each feed-out wire $\omega_i$ that does not correspond to Monitor input wires, that is, for $i\in\{m+1,\dots,c+s+m\}$, the simulator picks one random string $w_i[r]$.
    \item For feed-out wires $\omega_i$, when that correspond to the input wires, that is, $i\in\{1,\dots,m\}$
    \begin{enumerate}
        \item for round $r=1$, the simulator also picks randomly $w_i$ for feed-out wires that correspond to the Monitor-input. 
        \item for subsequent rounds, where $r>1$, and for the output wires, the simulator instead selects labels from previously selected values, that is, $w_i[r] \gets w_{O+i}[r-1]$.
    \end{enumerate}
    \item For feed-in wire $\iota_i$, that is, $i\in\{1,\dots,2c\}$, the simulator assigns $u_i[r] = w_j[r]$, where output wire $\omega_j$ is connected to input wire $\iota_i$. Further, the simulator also generates $u_i'[r]$, a random string, for each feed-in wire $\iota_i$.
    % \item The simulator defines values $w_j[r] = g_{j}^{\alpha_j[r]}$ for all $j\in[O]$. For $j\in\{O+1,\dots,O+m\}$, the simulator defines 
    % $w_j[r] = g_{j-O}^{\beta_{j-O}[r]}$. 
   
    % For the distinguished feed-out wire $\omega_{O+m+1}$, the simulator assigns a group element at random 
    %   $w_{O+m+1}[r]\xleftarrow{\$}\Gb$. It also generates a random group element $w'_{O+m+1}[r]$ corresponds to the output bit $b[r]$.
    
    % The simulator then labels the feed-in wires 
    % $u_i[r] = {g_{\pi(i)}}^{t_i\cdot\alpha_{\pi(i)}[r]}$, for each $i\in I$, 
    % and also assigns values to $u_{i}'[r]$ uniformly at random from the group $\Gb$ for each $i\in [I]$.
    \item\label{item:simulatorYaoGG} Finally, the simulator computes the garbled gates for each gate $j\in \{1,\dots,c\}$, that is, 
      \[
  \textbf{encGG}_j = \textbf{encYao} \begin{pmatrix}
     \left[u_{2j-1}[r], u_{2j-1}'[r]\right], \\
     \left[u_{2j}[r], u_{2j}'[r]\right], \\
    \left[w_{m+s+j}[r],   w_{m+s+j}[r]\right]
  \end{pmatrix}
  \]
  and uses 
   $\textbf{encGG}_j$ for all $j\in \{1,\dots,c\}$ as the message from the System to the Monitor. 
   The simulator sends at each step the keys $w_{1},\dots,w_{s}$ as the keys corresponding to the input wires. 
   The simulator also sends the values $w_{O+m+1}[r]$ and a value $w'_{O+m+1}[r]$ randomly generated value such that $w_{O+m+1}[r]$ corresponds to the output wire. That is, if it was asked to $\terminate$~in the ideal setting, then $w_{O+m+1}[r]$ is sent in the place of the label corresponding $0$, and otherwise $w_{O+m+1}[r]$ is sent in the place of the label corresponding to bit~$1$. 
    \item \textbf{For $r=1$}, and for each $i\in \{1,\dots,m\}$, the simulator runs as a subroutine a simulator $S^{\texttt{OT}}_{\texttt{Monitor}}(w_{s+i}^{b_i}[1],b_i)$ for a secure OT-functionality protocol where $b_i$ corresponds to the $i^\text{th}$ bit in the Monitor state $\mu[1]$.
\end{enumerate} 

\subsubsection*{Indistinguishability of Simulator and view of Monitor.}
We first write a short hand to represent the real as well as the simulated views of the Monitor. 

\noindent\textbf{Simulated view of Monitor.}
\begin{itemize}
    \item We will write $\fakeGarble_{\Cc}[r]$ to refer to the garbled gates generated by the simulated as generated in Step~\ref{item:simulatorYaoGG} of the simulator. 
    \item We refer to using $\Bar{M}_i^{\OT}$, the view of the Monitor of the oblivious transfer protocol using simulator $S^{\OT}_\texttt{Monitor}(u^{b_i}_j[1],b_i)$ where $b_i$ is the $i^\text{th}$ bit in the Monitor state $\mu[1]$.  
    \item We write $\fakeKeys[r]$ to represent
    $$\tpl{\Bar{w}_1[r],\dots,\Bar{w}_s[r], \Bar{w}^0_{O+m+1}[r],\Bar{w}^1_{O+m+1}[r]}$$ which are the list of labels for the input wires and the two labels for the distinguished output wire for round $r$,
    where the input wires described by the simulator are $\Bar{w}_i[r]$ for each $i\in \{1,\dots,s\}$ and  the two distinguished output wires using generated by the simulator as $\Bar{w}^0_{O+m+1}[r]$ and $\Bar{w}^1_{O+m+1}[r]$. The overline is added to enable distinction from the real view. 
\end{itemize}
Therefore,
the simulated view is written as $\tpl{\mu[1],\Bar{r}_M, J[1],J[2],\dots,J[\ell]}$ where $\Bar{r}_M$ is the random seed and 
for round $r=1$,
 $$J[1] = \tpl{\fakeGarble_{\Cc}[1],\fakeKeys[r],\Bar{M}^{\OT}_1\dots,\Bar{M}^{\OT}_m}$$ 
 which consists also of messages for the messages received during the oblivious transfer protocol, and later rounds only consists of the garbled circuit and the labels for the feed-out wires where for round $r\geq 2$, we have 
 $$J[r] = \tpl{\fakeGarble_{\Cc}[k], \fakeKeys[r]}$$

\textbf{Real view of Monitor.} Recall that the distribution of the real protocol would consist of the input of the Monitor $\mu[1]$, the input of the random tape $r_M$ and the messages $m_i$ received during the protocol,  the messages $K[i]$, the messages sent during round $i$. \\Therefore, the view of the Monitor is $\tpl{\mu[1],r_M, K[1],K[2],\dots,K[\ell]}$ if there are $\ell$ rounds.

We refer to by $\Garble_{\Cc}[r]$, the set of garbled gates of circuit $C$ as generated during an real execution of the protocol.
Similarly, we write $\keys[r]$ to represent  $$\tpl{{w}^{\pstring_1[1]}_1[1],\dots,w^{\pstring_s[1]}_s[1], w^0_{O+m+1}[1],w_{O+m+1}^1[1]},$$ where  $\pstring[r]$ is the input of the system on the round $r$ and $\pstring_i[1]$ corresponds to the $i^{\text{th}}$ bit of this input. This corresponds to the list of keys corresponding to the labels of the input wires of the System as well as the two output labels for the distinguished output wire. 

For round $r=1$,
 $$K[1] = \tpl{\Garble_{\Cc}[1],\keys[1] ,M^{\OT}_1\dots,M^{\OT}_m}.$$ 
Notice that the first round consists also of messages for the messages received during the oblivious transfer protocol, and later rounds only consists of the garbled circuit and the labels for the feed-out wires. 
Therefore, for round $r\geq 2$, we have $$K[r] = \tpl{\Garble_{\Cc}[k], \keys[r]}$$ where $\encGG_i[r]$ is the encrypted message consisting of four cipher texts generated by garbling the labels for the feed-in and feed-out wires of gate $G_i$.

We wish to show $$\tpl{\mu[1],\Bar{r}_M, J[1],J[2],\dots,J[\ell]} \indistinguishable \tpl{\mu[1],r_M, K[1],K[2],\dots,K[\ell]}.$$
We proceed in steps, where we can show $\fakeGarble_{\Cc}[i]\indistinguishable \Garble_{\Cc}[i]$ for all $i$, and later combined with the indistinguishability of the views in the oblivious transfer protocol, we get our desired result.
\paragraph*{Indistinguishability of garbled circuit and simulated garbled circuit.}
We first show that garbled circuit and the one generated by the simulator are indistinguishable. This is a standard argument of Yao's protocol, and therefore, we only give an overview. 

Since each garbled gate $j$ consists of an encryption of the same key $w_{m+s+j}$ for any pair of keys for all four combinations of the input keys $\left[u_{2j-1}[r], u_{2j-1}'[r]\right]$, and $
     \left[u_{2j}[r], u_{2j}'[r]\right]$ associated with the input wires. 
We fix an round $r$ and therefore only refer to the strings as $\Garble_{\Cc}$ and $\fakeGarble_{\Cc}$ without reference to the round.

Consider an alternate construction of $\fakeGarble_{\Cc}$  that has the same distribution as the original construction of $\fakeGarble_{\Cc}(G)$ generated by the simulator, but is however obtained from an actual instance of $\Garble_{\Cc}$.
Each gate in $C$ has a garbled gate in $\Garble_{\Cc}$ which consists of 4 ciphertexts. Based on the input of both the System and the Monitor, we compute which wire is going to be evaluated using the input keys, and call such wires active. 
Note that labelling such wires active already requires knowing more than just the input of the System as well as the Monitor, but we remark that nevertheless the final garbled circuit we obtain at the end results in the same distribution as the only that is not given input. 

With the knowledge of the input of all parties in the circuit, and by traversing the circuit from the input wires to the output wires, in the topological order, we modify the garbled gates one by one. 
More rigorously, if for a gate $g$, the left feed-in labels are $L^0,L^1$ and the right labels are $R^0,R^1$ and the output labels are $S^0,S^1$, the garbled gate exactly is (a permutation of) the encryption of $\left\{\Enc_{L^\alpha,R^\beta}\left(
        S^{G(\alpha,\beta)} \right) \right\}_{\alpha,\beta \in \{0,1\}} $
For a gate $g$, let $\gamma$ correspond exactly to the output value of the gate. This can be computed from knowing the input and computing the value of the circuit in a bottom-up manner. Therefore, $S^\gamma$ would be the label that would be obtained on decrypting the garbled gate with the label that was marked ``active''.
We replace all 4 encryptions with just one encryption, $$\left\{ \Enc_{L^\alpha,R^\beta}  \left(
        S^{G(\alpha,\beta)} \right) \right\}_{\alpha,\beta \in \{0,1\}} \to \left\{ \Enc_{L^\alpha,R^\beta}  \left(
        S^{\gamma} \right) \right\}_{\alpha,\beta \in \{0,1\}}$$ 

Observe that in the $\fakeGarble_{\Cc}$ provided by the simulator, as well as the one obtained above, each gate contains only one label that is encrypted using four different keys obtained from labels of input wires. Further, this encrypted plain text is a string that is chosen uniformly at random. Therefore, this alternate construction should have an identical distribution to the garbled gates constructed by our simulator, and therefore, we can refer to $\fakeGarble_{\Cc}$ as the one obtained from the garbled gate above.

We now prove that $\Garble_{\Cc}\indistinguishable \fakeGarble_{\Cc}$ under \cref{assumption:CPA} by providing a series of intermediate garbled gates, which we denote by $\Garble^i_{\Cc}$ for $i\in[c]$, where in $\Garble^i_{\Cc}$, the first $i$ garbled gates are replaced with the ``fake'' garbled obtained as in $\fakeGarble_{\Cc}$. Note that since all gates are replaced, $\Garble^c_{\Cc}$ corresponds exactly to $\fakeGarble_{\Cc}$, whereas, we refer to $\Garble^0_{\Cc}$ to also mean $\Garble_{\Cc}$. 

If it was not true that $\Garble_{\Cc}\indistinguishable \fakeGarble_{\Cc}$, then we know that there is some $i$ such that it is also not true that $\Garble^{i}_{\Cc}\indistinguishable\Garble^{i+1}_{\Cc}$.
This means if there is a distinguisher $\Dc$ for  $\Garble_{\Cc}$ and $\fakeGarble_{\Cc}$, then $$\lvert\Pr[\Dc(\Garble_{\Cc}) = 1]-\Pr[\Dc(\fakeGarble_{\Cc})]=1\rvert > 1/\poly(n)$$
therefore, the same $\Dc$  such that for that $i$, we have  $$\lvert\Pr[\Dc(\Garble^i_{\Cc}) = 1]-\Pr[\Dc(\Garble^{i+1}_{\Cc})]=1\rvert > 1/c\cdot\poly(n).$$ 
Using standard arguments, we can obtain a probabilistic polynomial time distinguisher using $\Dc$ above that distinguishes $\Garble_{\Cc}$ and $\fakeGarble_{\Cc}$, to build a distinguisher that can distinguish the double encrypted texts.
Further, the following lemma says that if $\Enc$ was secure under CPA, using \cref{assumption:CPA}, then $\Enc$ must also be secure under chosen double encryptions.
\begin{lemma}[\protect{\cite[Lemma 4]{LP09}}]~\label{lemma:CPAdouble}
    Let $(G, E, D)$  be a private-key encryption scheme that has indistinguishable encryptions under chosen plaintext attacks. Then $(G, E, D)$ is secure under chosen double encryption.
\end{lemma}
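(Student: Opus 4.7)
The plan is to prove Lemma~\ref{lemma:CPAdouble} via a standard reduction that reduces any attack on the chosen double encryption game to an attack on the ordinary CPA security of $(G, E, D)$.

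First, I would pin down the formal game: an adversary $\Ac$ interacts with an oracle that, on each query $(m_0^i, m_1^i)$, returns a ciphertext $\Enc_{k_1,k_2}(m_b^i)$ under a fixed key pair $(k_1, k_2)$, where at least one of $k_1, k_2$ is hidden from $\Ac$ (and the other may even be adversarially chosen); the goal is to guess $b$. Without loss of generality, assume $k_1$ is the hidden key. Call the two resulting games $H_0$ and $H_1$ according to the value of $b$.

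Second, I would construct a CPA adversary $\Bc$ against $(G, E, D)$ out of any distinguisher $\Ac$ for $H_0$ versus $H_1$ with non-negligible advantage $\varepsilon$. The reduction $\Bc$ samples $k_2$ itself, and for each double-encryption query $(m_0^i, m_1^i)$ of $\Ac$ it forwards an appropriately transformed pair to its own CPA oracle (keyed by $k_1$), producing a ciphertext distributed identically to $\Enc_{k_1,k_2}(m_b^i)$; $\Bc$ then relays this ciphertext to $\Ac$ and finally outputs $\Ac$'s guess. Since $\Bc$ perfectly simulates $\Ac$'s view, it inherits advantage $\varepsilon$ in the single-key CPA game, contradicting \cref{assumption:CPA}. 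The symmetric case in which $k_2$ is the hidden key is handled by swapping the roles of the two keys in the reduction.

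The main obstacle is the interaction between the two keys in the definition of $\Enc_{k_1,k_2}(\cdot)$. If this operation is a sequential composition like $\Enc_{k_1}(\Enc_{k_2}(\cdot))$, the reduction above is immediate: $\Bc$ first encrypts under its self-chosen $k_2$ and then feeds the inner ciphertext through the CPA oracle. For non-sequential compositions, such as the hash-based $\Enc_{L,R}$ used in Section~\ref{sec:experiments}, one must first argue separately that the combined scheme is IND-CPA relative to each individual key, before the reduction goes through. A further, routine subtlety is that the double-encryption game admits polynomially many queries; this is handled by a standard hybrid over queries, which introduces at most a polynomial loss in advantage and therefore preserves negligibility.
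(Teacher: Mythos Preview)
The paper does not prove this lemma; it simply quotes it verbatim from Lindell and Pinkas~\cite{LP09} and uses it as a black box in the hybrid argument for Theorem~\ref{thm:Protocol1secure}. So there is no ``paper's own proof'' to compare against, only the original in~\cite{LP09}.

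Your reduction idea---sample one key yourself, route the other through the single-key CPA oracle---is exactly the mechanism Lindell and Pinkas use. The gap is that your formalisation of the game does not match theirs. In the chosen-double-encryption experiment of~\cite{LP09}, \emph{both} secret keys $k_1,k_2$ are generated by the experiment, and the adversary has CPA-oracle access to $\Enc_{k_1}(\cdot)$ and $\Enc_{k_2}(\cdot)$ separately. The adversary then outputs two message triples together with keys $k'_1,k'_2$ of its own choosing, and receives \emph{three} challenge ciphertexts: $\Enc_{k_1}(\Enc_{k_2}(x_b))$, $\Enc_{k_1}(\Enc_{k'_2}(y_b))$, and $\Enc_{k'_1}(\Enc_{k_2}(z_b))$. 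Your game---one double-encryption oracle, one hidden key, one possibly adversarial key---does not capture this, and it is precisely this mixed structure that the surrounding garbled-circuit argument needs, since a garbled gate contains four ciphertexts combining active and inactive wire labels in all four ways.

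Consequently the hybrid you want is not ``over queries'' but over the three challenge components: each step fixes two of the ciphertexts and changes the third from the $b=0$ to the $b=1$ message, and is reduced to single-key CPA under whichever of $k_1,k_2$ is the genuine secret in that component (the adversarially chosen key is simulated by the reduction, exactly as you propose). Once you adopt the correct game, your reduction template works for each of the three steps; you just need three instances of it rather than one, and the ``polynomially many queries'' issue does not arise.
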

Therefore, if there is a distinguisher for the $\Garble^{i}_{\Cc}$ and $\Garble^{i+1}_{\Cc}$, this contradicts the CPA security in \cref{assumption:CPA}.

\paragraph{Indistinguishability with and without using simulators for oblivious transfer.}
Consider the messages $$K[1] = \tpl{\Garble_{\Cc}[1], \keys[1],M^{\OT}_1,\dots,M^{\OT}_m}.$$ If each $M^{\OT}_i$ in the above was replaced with the messages sent by a simulator for the same oblivious transfer protocol, written as $\Bar{M^{\OT}_i}$, we argue that \begin{multline*}
    \tpl{\Garble_{\Cc}[1], \keys[1],M^{\OT}_1\dots,M^{\OT}_m} \\ \indistinguishable  \tpl{\Garble_{\Cc}[1],\keys[1],\Bar{M}^{\OT}_1\dots,\Bar{M}^{\OT}_m}
\end{multline*} 
This is because of the indistinguishability of each $M^{\OT}_i$ from $\Bar{M_i}^\OT$. 
Further, we know that $\Garble_{\Cc}[1]\indistinguishable \fakeGarble_{\Cc}[1]$. Therefore, we have 
\begin{multline*}
     \tpl{\Garble_{\Cc}[1], \keys[C],\Bar{M}^{\OT}_1\dots,\Bar{M}^{\OT}_m} \\\indistinguishable
      \tpl{\fakeGarble_{\Cc}[1], \fakeKeys[1],\Bar{M}^{\OT}_1\dots,\Bar{M}^{\OT}_m}
\end{multline*}
Since each $\keys[1]$ corresponds to randomly chosen strings in both the protocol as well as the simulator, these parts are indistinguishable in both the real view and the simulated view, thus showing $K[1]\indistinguishable J[1]$.

We only remark that our argument for one fixed round routinely extends to multiple rounds.
%For each round $r\in \ell$, we can argue that by replacing $K[i]$ with $J[i]$

\subsection*{Simulating the view of the System}
Since there are no messages sent from the Monitor to the System, this simulation is only of the oblivious transfer protocols. We formally define the simulated view. 
\begin{enumerate}
    \item Using its internal random tape, similar to the protocol, the simulator generates the strings  $w_i^0[r]$ and $w_i^1[r]$ for $i\in\{1,\dots,c+m+s\}$ and assigns values for $u_j^0[r]$ and $u_j^1$ for $j\in \{1,\dots,2c\}$.% and prepares the garbled gates to be sent.
    \item However, since the only messages sent in the protocol from the Monitor to the System are during Oblivious transfer.  Since we assume \textsc{ObliviousTransfer} protocol is secure under semi-honest adversaries, the simulator for our protocol runs as subroutine the simulator $S^{\texttt{OT}}_{\texttt{System}}(u_j^{0}[1],u_j^{1}[1])$ for the System, where the inputs of the System are $u_j^{0}[1]$, $u_j^{1}[1]$.
    \item After this step, the System in the protocol receives no messages from the System other than $\proceed$ or $\terminate$ at the end of each round.
\end{enumerate}

\subsubsection*{Indistinguishability of Simulator and view of System.}
The indistinguishability of the transcripts follows from the indistinguishability of the oblivious transfer protocol, since the only messages received from the Monitor are during the oblivious transfer protocol.
\section{Proof of correctness and security of \protocoltwo}\label{app:protocol2}
 \Protocoltwosecure*

To show that our protocol is secure, we construct two simulators in the ideal model that have distributions that are computationally indistinguishable from the real model. 

\paragraph{Some intuition.} To prove the above, for garbling, we use the same idea of building ``fake'' garbled gates where all four cipher texts in the garbled gates are obtained by encrypting the same key. However, to ensure the labels are still indistinguishable, we require the simulators generate labels that are indistinguishable. 
For generating ``fake'' labels, we argue that selecting exponents uniformly at random for each gate to create labels during each round  from the group is sufficient. To prove this, we use a technical lemma (\cref{lemma:DDHprop}) which allows us to prove indistinguishability. 

\paragraph{A technical lemma.} We first recall the direct corollary of a lemma from the work of Naor and Reingold that we use as an equivalent representation of the DDH assumption in our proofs. A similar corollary was used in the work of Liu, Wang, and Yu \cite{LWY22} to prove the correctness of their protocols. 
\NaorReingold*
% \begin{lemma}[\protect{\cite[Lemma~4.2]{NR04}}]\label{lemma:DDHprop}
%     Assuming that the DDH assumption holds in a cyclic group $\Gb = \seq{g}$ of prime order $q\in \Theta(2^k)$  for $n\in \mathrm{poly}(\kappa)$, given $n$ randomly chosen elements from the group $g_1,g_2,\dots,g_n\xleftarrow{\$}\Gb$ and $n+1$ randomly chosen exponents $a,a_1,a_2,\dots,a_n \xleftarrow{\$}\Zb_q$,  we have that $\tpl{g_1^{a},g_2^{a},\dots,g_n^{a}}$ is computationally indistinguishable from an $n$-tuple $\tpl{g_1^{a_1},g_2^{a_2},\dots,g_n^{a_n}}$.
% \end{lemma}

    To prove correctness is straightforward from the protocol. The only time the protocol fails is when a garbled gate is decrypted with the pair of keys and this opens more than one pair of keys. But clearly, this happens with negligible probability. 

    \subsection*{Simulating the view of the Monitor}
    To prove that the protocol is secure, we show that there is a simulator $\Sc$ that produces the Monitor's view of the transcript that is indistinguishable from a real execution of the protocol, assuming the DDH. 

    Recall that formally, we need to show 
    \begin{multline*}
        \set{\Sc_M^\ideal((\Cc,\mu[1]),\sigma,1^n)}_{(\Cc,\mu[1]),\sigma} \indistinguishable \\\set{\view^\pi_M((\Cc,\mu[1]),\sigma,1^n)}_{(\Cc,\mu[1]),\sigma}
    \end{multline*}
        
%     \[
%     \left\{\Sc\left(\tpl{\Cc, \mu[1]}, \tau, 1^n\right)\right\}_{\Cc, \mu[1], \sigma} 
%     \indistinguishable 
%     \eset{\texttt{view}_{\text{Monitor}}\!\left(
%     \Pc[1^n,\sigma], \Mc[1^n,\Cc,\mu[1]
%     %,\textsc{Output}\!(w_1,\dots,w_{\ell-1})
%     \right)}_{\Cc, \mu[1], \sigma}
% \]
 
    Consider the following simulator.
    The messages sent by the protocol during the set-up phase are only by the Monitor, and later, the messages sent are only by the System. 
    Thus the simulated view of the Monitor only includes the messages sent after the set-up phase. 
    Instead of selecting two  exponents per round of the protocol which exponentiates each wire, the simulator instead picks an exponent per gate, per round, and uses it to generate one label per gate. The simulator for garbling after proceeds similarly. 
\begin{itemize}
    \item[0.] The Simulator chooses a uniform random tape for the Monitor; 
this defines the values 
$g_1,g_2\dots,g_O\in \Gb$ along with $t_i\in\Zb_q$, which also gives values $L = [\ell_1,\ell_2,\dots,\ell_I]$ where $\ell_i = g^{t_i}_{\pi(i)}$, and $\pi(i)$ is obtained from the circuit $\Cc$ that the Monitor holds as input. 
\end{itemize}
The simulator computes at round $r\geq 1$ the following messages as the messages sent by the System during the protocol.
    \begin{enumerate}
%    \item The simulator picks a random group elements $w_f[r]\xleftarrow{\$}\Gb$ for the distinguished output wire instead of two distinct elements in the protocol.
    \item For first round, $r=1$, the simulator  picks at random distinct values $\alpha_i[r]\xleftarrow{\$}\Zb_q$ for each $i\in \{1,\dots,O\}$.
    \\For subsequent rounds, where $r>1$, and for the output wires, the simulator instead selects exponents from previously selected values where $\alpha_i[r]  \gets \beta_i[r-1]$ for $i\in \{1,\dots,m\}$.
    \item For each round, the simulator further selects random values $\beta_i[r]\xleftarrow{\$} \Zb_q$ for $i\in\{1,\dots,m\}$.
    \item The simulator defines values $w_j[r] = g_{j}^{\alpha_j[r]}$ for all $j\in\{1,\dots,O\}$. For $j\in\{O+1,\dots,O+m\}$, the simulator defines 
    $w_j[r] = g_{j-O}^{\beta_{j-O}[r]}$. 
   
    For the flag feed-out wire $\omega_{O+m+1}$, the simulator assigns a group element at random 
      $w_{O+m+1}[r]\xleftarrow{\$}\Gb$. It also generates a random group element $w'_{O+m+1}[r]$ corresponds to the output bit $b[r]$.
    
    The simulator then labels the feed-in wires 
    $u_i[r] = {g_{\pi(i)}}^{t_i\cdot\alpha_{\pi(i)}[r]}$, for each $i\in \{1,\dots,I\}$, 
    and also assigns values to $u_{i}'[r]$ uniformly at random from the group $\Gb$ for each $i\in \{1,\dots,I\}$.
    \item \label{item:simulatorLiuGG} Finally, the simulator computes the garbled gates for each gate $j\in \{1,\dots,c\}$, i.e, 
      \[
  \textbf{encGG}_j = \textbf{encYao} \begin{pmatrix}
     \left[u_{2j-1}[r], u_{2j-1}'[r]\right], \\
     \left[u_{2j}[r], u_{2j}'[r]\right], \\
    \left[w_{m+s+j}[r],   w_{m+s+j}[r]\right]
  \end{pmatrix}
  \]
  and uses 
   $\textbf{encGG}_j$ for all $j\in \{1,\dots,c\}$ as the message from the System. 
   The simulator sends the keys for the $s$ input wires as $u_{1}[1],\dots,u_{s}[1]$ as the keys corresponding to the input. 
   The simulator also sends the values $w_{O+m+1}[r]$ and $w'_{O+m+1}[r]$ randomly generated value such that $w_{O+m+1}[r]$ corresponds to the output bit $b[r]$.
    \item For round $r=1$, Since we assume \textsc{ObliviousTransfer} protocol is secure under semi-honest adversaries, there exists a
    simulator $\Sc^{\texttt{OT}}_{\texttt{Mon}}(u_j^{b_i}[1],b_i)$ for the protocol that uses only the strings  $u_j^{b_i}[1]$ and bit $b_i$, where $b_i$ corresponds to the $i^\text{th}$ bit in the Monitor state $\mu[1]$. The simulator uses this as a subroutine for each $i$ in a sequential manner. %Due to the 
    %, by giving as input the string $b_i$. 
 %   
%    as subroutine the  
\end{enumerate} 

%We now need to show that the output produced by the simulator is indistinguishable from the view of the program, assuming the number of rounds is a polynomial in the security parameter. 
We proceed similarly to the earlier protocol and write notation for the view of the Monitor and the protocol.

\noindent\textbf{Simulated view of the Monitor.} The simulated view is  $$\tpl{\Cc,\mu[1], \Bar{r}_M, J[1],\dots,J[\ell]},$$ where $\Cc, \mu[1]$ denote the input of the Monitor, $\Bar{r}_M$ is a random tape of the Monitor, and finally, $J[r]$ is the messages simulated for round $r$.  We  let
 $$J[1] = \tpl{\fakeGarble_{\Cc}[1], \fakeKeys[1],\Bar{M}^{\OT}_1\dots,\Bar{M}^{\OT}_m}$$ 
and for round $r\geq 2$, we have 
$$J[r] = \tpl{\fakeGarble_{\Cc}[k],\fakeKeys},$$ where the terms are described below. 
\begin{itemize}
    \item In round $r$, we denote by $\fakeGarble_{\Cc}[r]$, the garbled circuit generated by the simulator as in Step~\ref{item:simulatorLiuGG} above. 
    \item For round $r=1$, let $\Bar{M}_i^\OT$ represent the view obtained by using the simulator of the oblivious transfer protocol for the $i^{\text{th}}$ oblivious transfer instance. 
    \item We use $\fakeKeys[r]$ to refer to the labels of the input wires corresponding to the System's input along with the unique output wire in round $r$, that is, $$\fakeKeys[r] = \tpl{\Bar{w}_{1}[r],\dots,\Bar{w}_{s}[r], \Bar{w}^0_{O+m+1}[r],\Bar{w}_{O+m+1}^1[r]}.$$ %Observe that these two labels are generated as random group elements.
\end{itemize}
We end by recalling that simulated view is $\tpl{\Cc,\mu[1], \Bar{r}_M, J[1],\dots,J[\ell]}$.

\noindent\textbf{Real view of the Monitor.}
We say that view of the Monitor based on a correct execution of the protocol is $\tpl{\Cc,\mu[1],r_M, K[1],\dots,K[\ell]}$, 
where $\Cc$ and $\mu[1]$ are the inputs, $r_M$ the random tape
 $K[i]$ denote the messages simulated for round $i$ and therefore 
$$K[1] = \tpl{\Garble_{\Cc}[1], \keys[1],M^{\OT}_1\dots,M^{\OT}_m}$$ 
 where $\pstring_i[r]$ is the input of the system at round $r$
and for round $r\geq 2$, we have $$K[r] = \tpl{\Garble_{\Cc}[r],  \keys[r]}$$ 
where we denote using $\Garble_{\Cc}[r]$, the garbled gates generated by the protocol in round $r$, and $M^{\OT}_i$ denotes the messages sent in a real execution of the oblivious transfer protocol, and $\keys[r]$ to represent the labels of the keys $$\tpl{w^{\pstring_1[r]}_{1}[r],\dots,w^{\pstring_s[r]}_{s}[r], w^0_{O+m+1}[r],w_{O+m+1}^1[r]}$$ sent.
% We now show that the above transcript generated by the simulator above is computationally indistinguishable by an $\omega$-IP. 
\subsubsection*{Indistinguishability of Simulator and view of Monitor.} We show that the above simulated view is indistinguishable from the real view if the number of rounds is $r = 1$. Then we show that as along as $r$ is a polynomial in the security parameter, and if the simulated view is indistinguishable for $r$ rounds, then it is also indistinguishable for $r+1$ rounds. 

% We first show that $\tpl{\Cc,m[1], \Bar{r}_M, J[1]}$ is indistinguishable from $\tpl{\Cc,m[1], {r}_M, K[1]}$.
% \begin{lemma}[\protect{\cite[Lemma~3]{LWY22}}]\label{lemma:LWY22polyindistinguish}
%     For a group $\Gb$ of order $q \in \Theta(2^n)$, and a value $k = \poly(n)$, for $k$  elements $g_1,\dots,g_k$ chosen uniformly at random from $\Gb$, and exponents $e_1,\dots,e_{2k}$ and  exponents  $d,d'$ chosen  independently and uniformly  at random from $[q-1]$, the following are computationally indistinguishable under the DDH assumption (\cref{assumption:DDH}):
%     \begin{itemize}
%         \item[(a)] $\tpl{\set{g_i}_{i\in[k]},\set{g_i^{d}}_{i\in[k]},\set{g_i^{d'}}_{i\in[k]}}$
%         \item[(b)] $\tpl{\set{g_i}_{i\in[k]},\set{g_i^{e_i}}_{i\in[2k]}}$.
%     \end{itemize}
% \end{lemma}
% Using the lemma above, we prove an extension of the lemma for polynomial many values $d,d'$. 
To show that the views are indistinguishable, we first need the following lemma which is an extension from the work of Liu, Wang, and Yu~\cite[Lemma~3]{LWY22} which we extend to suit our case. 
\begin{lemma}\label{prop:extensionDDHpoly}
        For a group $\Gb$ of order $q \in \Theta(2^n)$, and values $f,k \in \poly(n)$, for $k$  elements $g_1,\dots,g_k$ chosen uniformly at random from $\Gb$, and exponents $e_1,\dots,e_{kf}$ and  exponents  $d_1,\dots,d_f$ chosen  independently and uniformly  at random from $[q-1]$, the following are computationally indistinguishable under the DDH assumption (\cref{assumption:DDH}):
    \begin{itemize}
        \item[(i)] $\tpl{\set{g_i}_{i\in\{1,\dots,k\}},\set{g_i^{d_1}}_{i\in\{1,\dots,k\}},\dots,\set{g_i^{d_f}}_{i\in\{1,\dots,k\}}}$
        \item[(ii)] $\tpl{\set{g_i}_{i\in\{1,\dots,k\}},\set{g_i^{e_i}}_{i\in\{1,\dots,kf\}}}$.
    \end{itemize}
\end{lemma}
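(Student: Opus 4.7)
The plan is to prove this via a standard hybrid argument that uses Lemma~\ref{lemma:DDHprop} as the ``one-block'' indistinguishability primitive, swapping one block of $k$ exponents at a time from the shared-exponent form of distribution~(i) to the independent-exponent form of distribution~(ii).

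First, I would set up a sequence of hybrids $H_0, H_1, \dots, H_f$ that share the same base tuple $(g_1, \dots, g_k)$ but differ in how the remaining $kf$ group elements are organised into $f$ blocks of length $k$. In hybrid $H_\ell$, the first $\ell$ blocks are drawn as in distribution~(ii), namely each block has the form $(g_1^{e_1'}, \dots, g_k^{e_k'})$ for fresh independent exponents, while the remaining $f - \ell$ blocks are drawn as in distribution~(i), namely each has the form $(g_1^{d_j}, \dots, g_k^{d_j})$ for a single shared exponent $d_j$. By construction $H_0$ is exactly distribution~(i) and $H_f$ is exactly distribution~(ii), so it suffices to show that consecutive hybrids are computationally indistinguishable.

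Second, for each $\ell \in \{0, 1, \dots, f-1\}$, I would reduce the indistinguishability of $H_\ell$ and $H_{\ell+1}$ to Lemma~\ref{lemma:DDHprop}. The two hybrids agree everywhere except on block $\ell+1$: in $H_\ell$ it is $(g_1^{d_{\ell+1}}, \dots, g_k^{d_{\ell+1}})$, while in $H_{\ell+1}$ it is $(g_1^{e_1'}, \dots, g_k^{e_k'})$ with independent exponents. This is precisely the gap covered by Lemma~\ref{lemma:DDHprop} on $k$ base elements. The reduction takes its challenge $(g_1, \dots, g_k, h_1, \dots, h_k)$, samples all other blocks itself (each either with one fresh shared exponent or with $k$ fresh independent exponents), plants $(h_1, \dots, h_k)$ into block $\ell+1$, and forwards the resulting transcript to the distinguisher. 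When the challenge is of type~(i) of Lemma~\ref{lemma:DDHprop} the simulation is distributed exactly as $H_\ell$, and otherwise exactly as $H_{\ell+1}$.

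Finally, I would close by a triangle-inequality sum of distinguishing advantages over the $f$ consecutive transitions; since $f \in \mathrm{poly}(n)$ and each gap is negligible by Lemma~\ref{lemma:DDHprop}, the total advantage between $H_0$ and $H_f$ is also negligible. The one spot that demands care is verifying the reduction in the second step is efficient and only uses publicly computable operations: it never needs any discrete logarithm of the $g_i$, since every non-challenge block is produced by raising the known $g_i$ to self-sampled exponents. This makes the reduction polynomial-time and syntactically faithful to the corresponding hybrid, completing the argument.
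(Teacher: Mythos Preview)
Your proposal is correct and follows essentially the same approach as the paper: a hybrid argument over the $f$ blocks, where each consecutive pair of hybrids differs in exactly one block and is reduced to Lemma~\ref{lemma:DDHprop}. Your reduction (planting the challenge $(h_1,\dots,h_k)$ directly into the differing block) is in fact slightly cleaner than the paper's write-up, but the structure and key idea are the same.
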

\begin{proof}
The distribution above is generated by selecting a \emph{block} of $k$ elements $\{g_i\}_{i\in\{1,\dots,k\}}$ from $\Gb$ followed $f$ many blocks where each block is created by either a single exponent per block or different exponents for each group element~$g_i$.

We create hybrid distributions, where for some $j\leq f$, we resultant distribution is obtained by choosing the first $j$ blocks as in distribution (ii) and the later blocks as in distribution (i). More formally, consider distributions $\distribution_j$ for each $j\in\{1,\dots,f\}$ defined as follows, where the first $j$ groups of the initially chosen $\set{g_i}_{k}$ are exponentiated with randomly chosen exponents, whereas the other elements all have one exponent  per block, that is, an exponent $d_t$ is chosen the $t^\text{th}$ block for $t\in\{j+1,\dots,f\}$ to get the tuple  
   \begin{multline*}\Big(\set{g_i}_{i\in\{1,\dots,k\}},\set{g_i^{e_i}}_{i\in\{1,\dots,jk\}},\set{g_i^{d_{j+1}}}_{i\in\{1,\dots,k\}},\\\dots,\set{g_i^{d_f}}_{i\in\{1,\dots,k\}}\Big)
   \end{multline*}

We recall that \cref{lemma:DDHprop} states that 
the following are computationally indistinguishable
\begin{itemize}
    \item[(a)]$\tpl{\tpl{g_1\dots,g_k},\tpl{g_1^{a},g_2^{a},\dots,g_k^{a}}}$, where  $g_i$s are drawn uniformly at random from $\Gb$ and $a$ is drawn uniformly at random from $\Zb_q$
    \item[(b)]$\tpl{\tpl{g_1\dots,g_k},\tpl{g_1^{a_1},g_2^{a_2},\dots,g_k^{a_k}}}$, where $g_i$s are drawn uniformly at random from $\Gb$ and $a_i$s is drawn uniformly at random from $\Zb_q$
\end{itemize}
We show that if distribution $\distribution_j$ and $\distribution_{j+1}$ are distinguishable, then we can obtain a distinguisher for the two distributions (a) and (b) in the restated version of \cref{lemma:DDHprop}.

Suppose there is a distinguisher between $\distribution_j$ and $\distribution_{j+1}$. Given a set of $k$ elements $\tpl{\tpl{g_1\dots,g_k},\tpl{h_1,h_2,\dots,h_k}}$ we can distinguish weather this is obtained as an instance of (a) or (b) described above (contradicting \cref{lemma:DDHprop}) by selecting random exponents $e_i$ for all $i\in\{1,\dots,jk\}$ and then selecting $d_t$ for all $t\in \{j+1,\dots,f\}$ and then sending the following as input to the distinguisher of $\distribution_j$ and $\distribution_{j+1}$: 
\begin{multline*}
    \Big(\set{g_i}_{i\in\{1,\dots,k\}},\set{g_i^{e_i}}_{i\in\{1,(j-1)k\}},\set{h_i^{e_i}}_{i\in\{(j-1)k+1,\dots,jk\}},\\\set{g_i^{d_{j+1}}}_{i\in\{1,\dots,k\}}, \dots,\set{g_i^{d_f}}_{i\in\{1,\dots,k\}}\Big)
\end{multline*}

Observe that the obtained distribution is identical to distribution $\distribution_j$ if the elements $\set{h_i}_{i\in \{1,\dots,k\}}$ is from (a) and is identical to distribution $\distribution_{i+1}$ if the elements were from (b).
    % Further, we cconsider the di
    % Suppose there is a distinguisher $\Ac$ for the above when $f=2$. 
    % We build a distinguisher $\Dc$ for the instance in \cref{lemma:LWY22polyindistinguish}. 
    % Consider an input for the distinguisher $\Ac$, which takes as input the set of group elements $\tpl{\set{g_i}_{i\in[k]},\set{h_i}_{i\in[4k]}}$ and decides if it is from distribution (a) or (b). 
    % Consider an instance of input for $\Dc$ which is obtained from the above by selecting $\epsilon\in [q-1]$ uniformly at random and computing 
    % $\tpl{\set{g_i}_{i\in[k]},\set{h_i^e}_{i\in[4k]}}$. 
%    This completes our proof.
\end{proof}
Notice that labels for $\Garble_\Cc$ are obtained from  the list $G$ or $L$ for feed-out and feed-in wires respectively by exponentiating with two randomly chosen exponents. 
Whereas, only one label is required for creating the ``fake'' garbled gate $\fakeGarble_\Cc$ and is obtained by selecting an exponent uniformly at random for each gate. 
The exponents chosen for each round is selected uniformly at random. 

Observe that therefore, if there are $r$ rounds, where $r$ is a polynomial, then the distributions of the keys 
that can be decrypted for each gate is exactly of the form in \cref{prop:extensionDDHpoly} where $f=r$. 
Recall that we call a label active if that label of the wire corresponds to the output evaluated from that gate. 
Therefore, from \cref{prop:extensionDDHpoly}, the distribution of the active labels from a real view are computationally indistinguishable from the simulated labels, since the distribution of active gates. This is because, the sequence of labels of the feed-out wires at round $r$: $\set{w_{j}^{b_j}[r]}_{j\in\{1,\dots,O\}}$, 
where bit  $b_j$ corresponds to the evaluation of the feed-out wire, across all rounds is identically distributed to (i), whereas the labels produced by the simulator across rounds is identically distributed to (ii).

% Further
% %Similarly, let the sequence of labels generated by the simulator be $\set{{\Bar{w}_j[r]}}_{j\in [O]}$. 
% These labels are further distributed identically, and are therefore also indistinguishable. 

The final output wires $w^0_{O+m+1}[1],w_{O+m+1}^1[1]$ are also chosen uniformly at random by the protocol as well as the simulator and hence indistinguishable from their counterparts generated by the simulator.

With the above \cref{prop:extensionDDHpoly}, this indistinguishability argument extends to the set of all labels across all polynomially many rounds. The proof of the indistinguishability of the garbled gates follow similarly from the earlier proof of indistinguishability of the garbled gates for Protocol~1 as well as Yao's protocol, combined with the fact that the set of active lables is identically distributed to the labels generated by the simulator.

Using this identical distribution of labels as well as a series of intermediate ``hybrid'' garbled gates are created by replacing the garbled gates from the $\Garble$ with $\fakeGarble$, we can prove indistinguishability in a routine manner. The indistinguishability of the intermediate steps follow due to the encryption scheme $\Enc$ being secure under \cref{assumption:CPA} along with \cref{prop:extensionDDHpoly}.
Which therefore ensures that transcripts created are also indistinguishable subject to \cref{assumption:CPA,assumption:DDH} (CPA, DDH).
% Observe that not much changes and the indistinguishability assumptions still holds\theju{Need to continue from here} 
% This shows that if the number of rounds is $1$, then the simulator is indistinguishable. 

% \paragraph{From round $\ell$ to round $\ell+1$.}
% The wires of the active 
% since only labels that can be obtained by decrypting are the labels $w_{m+p+j}^{b_j}[1]$ from $\Garble$, where $b_j$ corresponds to the valuation of the output of the circuit $j$, or the labels $\Bar{u}_j[1]$ from the $\fakeGarble_\Cc$. The set of labels $\set{u_j^{b_j}[1]\mid j\in[O]}$ and the set $\set{{\Bar{w}_j[1]}\mid j\in [c]}$ are indistinguishable.

%Finally, we remark that 

% \thejaswini{
% $\Garble_\Cc[i]$ is indistinguishable from $\fakeGarble_\Cc[i]$ because the labels generated are indistinguishable. From CPA. \\
% $\omega_{O+m+1}^0,\omega_{O+m+1}^1$ are also indistinguishable from random, since they are each obtained as  exponents of randomly chosen element $g_{O+m+1}$ with two values and are indistinguishable from two random elements from the group.  
% }
%is distinguishable for $r$ rounds, it is also 

    \subsection*{Simulating the view of the System}
    We now show a simulator that can generate the System's view of transcripts that is computationally indistinguishable (under the DDH assumption) from an execution of the protocol. Observe that the only view of the System is the initial set-up phase after which it receives no more messages from the Monitor, other than the oblivious transfer protocol for round $r=1$.

    Consider the following simulator that generates the view for the System. 
    
\begin{enumerate}
    \item  The simulator picks random group elements $g_i$ for each feed-out wire $\omega_i\in O$ and sends the list $[g_1,g_2,\dots,g_I]$ and also sends a list $L$ of labels by selecting elements uniformly at random from the group $\Gc$ to create $L = [\ell_1,\ell_2,\dots,\ell_I]$, where $\ell_i\xleftarrow{\$} \Gb$.
    \item Since we assume \textsc{ObliviousTransfer} protocol is secure under semi-honest adversaries, and in the hybrid model, the simulator runs as subroutine the simulator $\Sc^{\texttt{OT}}_{\texttt{System}}(u_j^{0}[1],u_j^{1}[1])$ for the System, where the inputs of the System are $u_j^{0}[1]$, $u_j^{1}[1]$ for the oblivious transfer protocol (for each $j\in \{1,\dots,m\}$).
    \item After this step, the simulator receives no messages from the System other than $\proceed$ or $\terminate$.
    % \item The simulator $\Mc$ picks random values $w_f^0[r]$ and $w_f^1[r]$ from the group $\Gb$ for the distinguished output wire. 
    % \item 
    % Similar to the protocol, for $r= 1$, the simulator  picks random, distinct values $\alpha_0[r],\alpha_1[r]\xleftarrow{\$}\Zb_q$;
    % \\For subsequent rounds, $\alpha_0[r]\gets \beta_0[r-1]$ and $\alpha_1[r]\gets \beta_{1}[r-1]$, using which, the Simulator prepares the garbled gates, as described in the protocol.
\end{enumerate} 
Since the simulator's only view consists of the messages sent during the setup phase followed by the oblivious transfer protocol in round $1$, it is enough to show that the transcript until round $1$ is indistinguishable. Further, since there are no messages sent by the Monitor after the set-up phase, we do not write the input of the program as a part of the view of the program. 

\noindent\textbf{Simulated view of the System}
   \begin{itemize}
    \item Let the messages sent during the setup-phase be denoted by $\Bar{A}^\setup = \tpl{\Bar{G},\Bar{L}}$, where $\Bar{G}$ is $O$ many randomly generated elements from the group, and $\Bar{L}$ is also $2c$-many randomly generated elements from the group. 
    \item Let $\Bar{P}^\OT_i$ denote the simulation of an execution of the oblivious transfer protocol of the System where the two strings used as by the Monitor for the $j^{\text{th}}$ input in the protocol $\OT$ are $u^0_j[1]$ and $u_j^1[1]$.
   \end{itemize}
Therefore, the view output by the simulation is $$\tpl{\Bar{r}_P,\Bar{A}^\setup, \Bar{P}^\OT_1,\dots,\Bar{P}^\OT_m}.$$

\noindent\textbf{Real view of the program}
   \begin{itemize}
    \item Let the messages sent during the setup-phase be denoted by $A^\setup = \tpl{G,L}$, where $G = [g_1,\dots,g_O]$ is $O$ many randomly generated elements from the group, and $L = [\ell_1,\dots,\ell_{2c}]$ where $\ell_i = g_{\pi(i)}^{t_i}$ and $t_i$ is a random value in $\Zb_q$. 
    \item Let $P^\OT_i$ denote the messages sent by the Monitor to the System  during a real execution of the oblivious transfer protocol where the two strings used as by the Monitor for the $j^{\text{th}}$ input in the protocol $\OT$ are $u^0_j[1]$ and $u_j^1[1]$.
   \end{itemize}
   The real view obtained from an execution of a protocol is  $$\tpl{r_P,{A}^\setup, {P}^\OT_1,\dots,{P}^\OT_m}.$$
\subsubsection*{Indistinguishability of Simulator and view of System.} 

Observe that both in the real execution and the simulated execution, the elements of $G$ are just random $O$ elements, and therefore, both $G$ and $\Bar{G}$ are indistinguishable. For convenience we will use $g_i$  to denote the $i^{\text{th}}$ element of both $G$ and $\Bar{G}$. 
First, we show that $\Bar{L}$, which consists of $2I$ random elements is indistinguishable from $L$ which consists of elements $\ell_i$s such that $\ell_i = g_{\pi(i)}^{t_i}$, where $t_i$ is chosen uniformly at random, and so is the permutation $\pi$.%\theju{change}
\begin{lemma}\label{prop:groupProposition}
    The distribution of $I$ random group elements is indistinguishable from the list $L = [\ell_1,\ell_2,\dots\ell_I]$ where $\ell_i = g_{\pi(i)}^{t_i}$, where $\pi \colon \{1,\dots,I\}\to \{1,\dots,O\}$, and $g_1,g_2,\dots,g_O$ are chosen uniformly at random from $\Gb_q$.
\end{lemma}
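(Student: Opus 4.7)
The proof is statistical rather than computational: I will show that the two distributions have negligible statistical distance, which immediately implies computational indistinguishability.

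First, since $\Gb$ has prime order $q$, every non-identity element of $\Gb$ is a generator. The base elements $g_1, \dots, g_O$ are sampled uniformly at random from $\Gb$, so by a union bound the probability that any of them equals the identity is at most $O/q$, which is negligible in the security parameter $n$ (since $q \in \Theta(2^n)$ and $O$ is polynomial in $n$).

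Next, I would condition on the event that every $g_j$ is a generator. Under this conditioning, for each $i$ the map $t \mapsto g_{\pi(i)}^{t}$ is a bijection from $\Zb_q$ onto $\Gb$; hence $\ell_i = g_{\pi(i)}^{t_i}$ is uniformly distributed on $\Gb$ because $t_i$ is uniform on $\Zb_q$. Crucially, even when $\pi(i_1) = \pi(i_2)$ for distinct indices $i_1, i_2$, the independence of the exponents $t_{i_1}$ and $t_{i_2}$ still makes $\ell_{i_1}$ and $\ell_{i_2}$ two independent uniform elements of $\Gb$ (two independent random exponentiations of the same generator yield two independent uniform group elements). Consequently, $(\ell_1, \dots, \ell_I)$ is distributed exactly like $I$ independent uniform samples from $\Gb$, even when jointly considered with $G = (g_1, \dots, g_O)$.

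Combining these two observations, the joint distribution of $(G, L)$ has statistical distance at most $O/q$ from the distribution in which $L$ is replaced by $I$ independent uniformly random group elements. Since $O/q$ is negligible in $n$, this yields the required computational indistinguishability (in fact statistical indistinguishability, which is strictly stronger). The only subtlety is the bookkeeping for the negligible identity-element correction; no appeal to DDH is needed here, because the randomisation by independent fresh exponents $t_i$ perfectly hides the base once the base is a generator. This is what makes the lemma unconditional, in contrast to \cref{prop:extensionDDHpoly}, where fresh exponents are reused across many blocks and the DDH assumption becomes essential.
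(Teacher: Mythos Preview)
Your argument is correct and in fact establishes a stronger conclusion than the paper's own proof. The paper proceeds by a hybrid argument: it defines intermediate lists $L_0 = L, L_1, \dots, L_O$, where $L_i$ is obtained from $L_{i-1}$ by replacing every entry of the form $g_i^{t_j}$ by a fresh random group element, and then argues $L_{i-1} \indistinguishable L_i$ by rewriting $g_i^{t_j} = (g^{t_j})^c$ and invoking the Naor--Reingold consequence of DDH (\cref{lemma:DDHprop}). Your proof bypasses this entirely by observing that, once each $g_{\pi(i)}$ is a generator, the map $t \mapsto g_{\pi(i)}^t$ is a bijection $\Zb_q \to \Gb$, and since every $t_i$ is a \emph{fresh} independent uniform exponent, the resulting $\ell_i$'s are jointly uniform and independent of $G$. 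The only loss is the $O/q$ correction for some $g_j$ being the identity, which is negligible.

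The two approaches differ in what they buy. Your route shows that this particular lemma is \emph{unconditional} (statistical, not merely computational under DDH), which is strictly stronger and also cleaner: the hybrid machinery and the appeal to \cref{lemma:DDHprop} are unnecessary here precisely because no exponent is reused across positions. Your closing remark pinpoints the contrast with \cref{prop:extensionDDHpoly} correctly: there a single exponent $d_j$ is shared across an entire block, so the bijection argument no longer gives independence and DDH genuinely enters. The paper's proof, on the other hand, has the minor virtue of uniformity---it reuses the same DDH-based template as elsewhere---but at the cost of invoking a computational assumption where none is needed.
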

\begin{proof}
    Consider the following $O$ intermediate distributions to generate a list of length $I$, which results in lists $L_1,L_2,\dots,L_O$. Let $L_0 = L$ obtained as described in the statement of the lemma. 
    We say list $L_i$ is obtained from list $L_{i-1}$ by replacing all instances of elements that are obtained as an exponent of $g_i$, that is $i = \pi(j)$ and $\ell_j  = \tpl{g_i}^{t_j}$, then $\ell_j$ is replaced with an element selected uniformly at random from the group.
    
    By definition $L$ is identically distributed to $L_0$ and we know that $L_O$ is identically distributed to $\Bar{L}$.
    We will show that for any $i$, $L_i \indistinguishable L_{i+1}$, to complete our proof of the above. 
    
    Suppose $O(j) = i$ for $j= j_1,j_2,\dots,j_k$. Since we obtain $L_{i+1}$ by replacing since we are only replacing elements of the form $\tpl{g_i}^{t_{j_1}}, \tpl{g_i}^{t_{j_2}},\dots,\tpl{g_i}^{t_{j_k}}$ with $k$ group elements chosen uniformly and independently at random $g^{a_1}, g^{a_2},\dots, g^{a_k}$ where $g$ is a generator of the group. Since $g_i = g^c$ for some value $c$ chosen uniformly at random, 
    we have that the sequence $\tpl{g_i}^{t_{j_1}}, \tpl{g_i}^{t_{j_2}},\dots,\tpl{g_i}^{t_{j_k}}$ $= \tpl{g^{t_{j_1}}}^c, \tpl{g^{t_{j_2}}}^c,\dots,\tpl{g^{t_{j_k}}}^c$. Therefore, it follows from \cref{lemma:DDHprop} that under the DDH assumption (\ref{assumption:DDH}) that these values are indistinguishable. 
    \end{proof}

    It is an easy extension of the above proof that the joint distribution $(\Bar{G},\Bar{L}$ is indistinguishable from $(G,L)$
Since $\Bar{A}^\setup = (\Bar{G},\Bar{L})$, and $A^\setup = (G,L)$ we have shown that $\Bar{A}^\setup \indistinguishable A^\setup$

Since both the  simulation of $n$ oblivious transfers, and the distribution of the transcript of real executions of $n$ $\OT$ protocols are indistinguishable, we can also further conclude that this joint distribution $\tpl{\Bar{A}^\setup, \Bar{P}^\OT_1,\dots,\Bar{P}^\OT_m}$ $\indistinguishable$ $\tpl{{A}^\setup, {P}^\OT_1,\dots,{P}^\OT_m}$. This shows that the view of System is indistinguishable from the simulated view. 
% \input{7cAppendix}

%
% \begin{thebibliography}{8}
% \bibliography{biblio}
% \end{thebibliography}
\end{document}